\documentclass[a4paper,UKenglish,cleveref,autoref,thm-restate,pagebackref]{lipics-v2021}

\pdfoutput=1 
\hideLIPIcs  

\usepackage[table]{xcolor}
\definecolor[named]{colorSecondary}{rgb}{0.75,0.35,0.10}
\usepackage{booktabs}
\usepackage{multirow}
\usepackage{mathtools}
\usepackage{nicefrac}
\usepackage{xspace}
\usepackage{tikz}
\usetikzlibrary{patterns}
\usetikzlibrary{calc}
\usepackage{pgfplots}
\pgfplotsset{compat=1.18}
\usepackage{contour}
\usepackage[vlined, linesnumbered, ruled]{algorithm2e}
\SetEndCharOfAlgoLine{}
\SetCommentSty{textrm}
\SetArgSty{normalfont}
\SetKwProg{Fn}{Function}{}{}
\SetKw{Continue}{continue}

\usepackage{todonotes}

\renewcommand{\r}[1]{\mathopen{}\left(#1\right)\mathclose{}}
\newcommand{\s}[1]{\mathopen{}\left[#1\right]\mathclose{}}
\renewcommand{\c}[1]{\mathopen{}\left\{#1\right\}\mathclose{}}
\newcommand{\abs}[1]{\mathopen{}\left|#1\right|\mathclose{}}

\newcommand{\fu}[2]{#1\r{#2}}

\newcommand{\op}[2]{#1\c{#2}}
\renewcommand{\O}[1]{\fu{\mathcal{O}}{#1}}

\newcommand{\f}[1]{\fu{f}{#1}}
\newcommand{\sig}[1]{\fu{\sigma}{#1}}

\newcommand{\mi}[1]{\op{\min}{#1}}

\newcommand{\C}{\mathcal{C}}
\newcommand{\E}{\mathcal{E}}
\newcommand{\I}{\mathcal{I}}

\newcommand{\real}{\mathbb{R}}
\newcommand{\interval}[1]{\left[#1\right)}
\newcommand{\nonneg}{\interval{0,\infty}}
\newcommand{\es}{\varnothing}

\newcommand{\problemtext}[1]{{\normalfont\textsc{#1}}}
\newcommand{\IO}[1]{{\normalfont\( #1 \)\nobreakdash-\problemtext{Interval Ordering}}\xspace}
\newcommand{\fIO}{\IO{f}}

\newcommand{\NPhard}[1]{{\normalfont NP\nobreakdash-hard#1}}

\newcommand{\citetdurr}{Dürr~et~al.~\cite{durr2012interval}}

\DeclareMathOperator{\opt}{opt}
\DeclareMathOperator{\tableOPT}{OPT}
\newcommand{\OPT}[1]{%
  \if\relax\detokenize{#1}\relax
    \tableOPT%
  \else
    \tableOPT\mathopen{}\r{#1}\mathclose{}%
  \fi
}

\let\tfrac\nicefrac

\newcommand{\dashedline}[7]{
  \draw[
    line width=#1,
    preaction={draw=white, line width=#1}
  ] let
    \p1 = (#2),
    \p2 = (#3),
    \n1 = {\figscale * veclen(\x2 - \x1, \y2 - \y1)}, 
    \n2 = {#6*#4 + (1 - #7)*#5}, 
    \n3 = {round((\n1 + \n2) / (#4 + #5)) * (#4 + #5) - \n2}, 
    \n4 = {\n1 / \n3} 
  in
    [
      dash pattern=on {\n4*#4} off {\n4*#5},
      dash phase={#6*\n4*#4}
    ] (#2) -- (#3);
}

\newcounter{problem}
\crefname{problem}{Problem}{Problems}
\creflabelformat{problem}{#2\circled{#1}#3}
\makeatletter
\newcommand*\circled[1]{\raisebox{.1\height}{\scalebox{0.8}{\tikz[baseline=(char.base)]{\node[shape=circle,draw,inner sep=.75pt] (char) {#1};}}}\hspace{.5pt}}
\newcommand{\pinput}[1]{\unskip\normalsize\textcolor{white}{\ensuremath{\blacktriangleright}}\nobreakspace\textcolor{lipicsGray}{\sffamily\textbf{Input: }} & \normalsize #1 \smallskip \\}
\newcommand{\pquestion}[2][Question]{\normalsize\textcolor{white}{\ensuremath{\blacktriangleright}}\nobreakspace\textcolor{lipicsGray}{\sffamily\textbf{#1: }} & \normalsize#2}
\newenvironment{problem}[1][]
{%
  \center%
    \minipage{1\textwidth}
      \refstepcounter{problem}
      \phantomsection
      \protected@edef\@currentlabelname{\problemtext{#1}}
      \normalsize\textcolor{lipicsGray}{\ensuremath{\blacktriangleright}}\nobreakspace
      {\sffamily\bfseries Problem~\theproblem. }\ignorespaces
      \problemtext{#1}.\medskip\\
      \tabularx{\textwidth}{@{}l@{}X}
}{%
      \endtabularx
    \endminipage
  \endcenter
}
\makeatother

\title{Computational Complexity of the Interval Ordering Problem} 

\titlerunning{Computational Complexity of the Interval Ordering Problem} 

\author{Simeon Pawlowski}{Technische Universität Berlin, Faculty IV, Institute of Software Engineering and Theoretical Computer Science, Algorithmics and Computational Complexity, Germany}{s.pawlowski@campus.tu-berlin.de}{}{}

\author{Vincent Froese}{Technische Universität Berlin, Faculty IV, Institute of Software Engineering and Theoretical Computer Science, Algorithmics and Computational Complexity, Germany}{vincent.froese@tu-berlin.de}{}{}

\authorrunning{S. Pawlowski and V. Froese} 

\Copyright{Simeon Pawlowski and Vincent Froese} 

\ccsdesc[500]{Theory of computation~Dynamic programming}
\ccsdesc[100]{Theory of computation~Fixed parameter tractability}
\ccsdesc[500]{Theory of computation~Problems, reductions and completeness}
\ccsdesc[300]{Applied computing~Bioinformatics}

\keywords{dynamic programming, polynomial-time algorithms, NP-hardness, running time lower bounds, protein folding} 

\acknowledgements{We thank anonymous reviewers for their valuable feedback.}

\nolinenumbers 

\EventEditors{John Q. Open and Joan R. Access}
\EventNoEds{2}
\EventLongTitle{42nd Conference on Very Important Topics (CVIT 2016)}
\EventShortTitle{CVIT 2016}
\EventAcronym{CVIT}
\EventYear{2016}
\EventDate{December 24--27, 2016}
\EventLocation{Little Whinging, United Kingdom}
\EventLogo{}
\SeriesVolume{42}
\ArticleNo{23}

\begin{document}

  \maketitle

  \begin{abstract}
    We study an interval ordering problem introduced by Dürr et al.~[Discrete Appl.\ Math.~2012] which is motivated by applications in bioinformatics.
    The task is to order a given set of~$n$ intervals with the goal of minimizing a certain objective which is defined via a given cost function~$f$ which assigns a cost to the exposed part of each interval (that is, the pieces not covered by previous intervals).
    We develop a dynamic programming approach which solves the problem with~$\O{2^n\text{poly}(n)}$ oracle calls to~$f$ and arithmetic operations.
    Moreover, our approach yields polynomial-time algorithms for all cost functions~$f$ such that~$f-f(0)$ is subadditive or superadditive.
    This answers an open question for the function~$f(x)=2^x$.
    We contrast these results by proving a running time lower bound of~$2^{n-1}$ for any algorithm that solves the problem for every function~$f$ (with oracle access only) and further proving NP-hardness for some classes of simple functions.
    Thus, we significantly narrow the gap regarding the computational complexity of the problem.
  \end{abstract}

  \section{Introduction}
    \label{sec:introduction}

    We study the \fIO problem introduced by~\citetdurr{} where we are given a set~$\I$ of~$n$ intervals $I_j=[a_j,b_j)\subset \real$ with~$a_j<b_j$ for $j=1,\ldots,n$ and the task is to find a total ordering of the intervals that minimizes the overall costs depending on the ``exposed parts'' of the intervals.
    For an ordering~$\sigma =(\sig{1},\ldots,\sig{n})\in S_n$ (where~$S_n$ denotes the set of permutations of~$\{1,\ldots,n\}$), the exposed part of an interval~$I_j$ is~$E_{\sigma,j}\coloneqq I_j\setminus\bigcup_{i:\sigma^{-1}(i)<\sigma^{-1}(j)}I_i$, that is, the subset of~$I_j$ which is not ``covered'' by intervals coming before~$I_j$ in~$\sigma$ (see \cref{fig:fio-example} for an illustration).
    Note that~$E_{\sigma,j}$ is a union of disjoint intervals.
    Its \emph{length}~$|E_{\sigma,j}|$ is defined as the total sum of the lengths of its individual intervals (an interval~$[a,b)$ has length~$|[a,b)|\coloneqq b-a$).
    Now, a given cost function~\( f\colon\nonneg \to \real \) assigns a cost to each interval depending on the length of its exposed part.
    Formally, the decision problem is defined as follows.

    \begin{problem}[{\ensuremath{f}{\normalfont-}\nolinebreak{}Interval Ordering}]
      \label{pr:f-interval-ordering}
      \pinput{A set \( \I= \c{I_1, \ldots, I_n} \) of non-empty intervals \( I_j = \interval{a_j, b_j} \subset \real \) and a number~\( W \in \real \).}
      \pquestion{Is there an ordering~$\sigma\in S_n$ such that $\sum_{j=1}^nf(|E_{\sigma,j}|)\le W$?}
    \end{problem}\noindent

    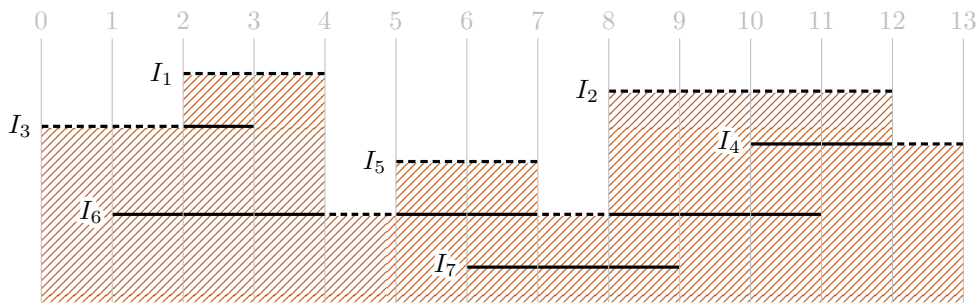
\begin{figure}[btp]
        \centering
        \pgfmathsetmacro{\figscale}{\textwidth/15cm}
          \begin{tikzpicture}[scale=\figscale]
            \draw[color=white] (-1, 0) -- (14, 0);

            \coordinate (a1)  at ( 2, 3.25);
            \coordinate (a13) at ( 2, 2.5 );
            \coordinate (b1)  at ( 4, 3.25);
            \coordinate (b16) at ( 4, 1.25);
            \coordinate (a2)  at ( 8, 3   );
            \coordinate (a26) at ( 8, 1.25);
            \coordinate (b2)  at (12, 3   );
            \coordinate (b24) at (12, 2.25);
            \coordinate (a3)  at ( 0, 2.5 );
            \coordinate (a30) at ( 0, 0   );
            \coordinate (b3)  at ( 3, 2.5 );
            \coordinate (a4)  at (10, 2.25);
            \coordinate (b4)  at (13, 2.25);
            \coordinate (b40) at (13, 0   );
            \coordinate (a5)  at ( 5, 2   );
            \coordinate (a56) at ( 5, 1.25);
            \coordinate (b5)  at ( 7, 2   );
            \coordinate (b56) at ( 7, 1.25);
            \coordinate (a6)  at ( 1, 1.25);
            \coordinate (b6)  at (11, 1.25);
            \coordinate (a7)  at ( 6, 0.5 );
            \coordinate (b7)  at ( 9, 0.5 );

            \fill[
                pattern=north east lines,
                pattern color=colorSecondary
            ]  (a30) -- (a3)%
                     -- (a13) -- (a1)%
                              -- (b1)%
                 -- (b16)%
                 -- (a56) -- (a5)%
                          -- (b5)%
                 -- (b56)%
                 -- (a26)     -- (a2)%
                              -- (b2)%
                    -- (b24)%
                    -- (b4)%
            -- (b40);

            \dashedline{1.2}      {a1}      {b1}{2.7}{1.8}{0}{0}
            \dashedline{1.2}      {a2}      {b2}{2.7}{1.8}{0}{0}
            \draw[line width=1.2] (a13) --  (b3);
            \dashedline{1.2}      {a3}     {a13}{2.7}{1.8}{0}{1}
            \draw[line width=1.2] (a4)  -- (b24);
            \dashedline{1.2}      {b24}    {b4}{2.7}{1.8}{1}{0}
            \dashedline{1.2}      {a5}     {b5}{2.7}{1.8}{0}{0}
            \draw[line width=1.2] (a6)  -- (b16);
            \dashedline{1.2}      {b16}    {a56}{2.7}{1.8}{1}{1}
            \draw[line width=1.2] (a56) -- (b56);
            \dashedline{1.2}      {b56}    {a26}{2.7}{1.8}{1}{1}
            \draw[line width=1.2] (a26) --  (b6);
            \draw[line width=1.2] (a7)  --  (b7);

            \contourlength{1.5pt}
            \contournumber{64}
            \coordinate[label=left:\( I_1 \)]                    (l1)  at  (a1);
            \coordinate[label=left:\( I_2 \)]                    (l2)  at  (a2);
            \coordinate[label=left:\( I_3 \)]                    (l3)  at  (a3);
            \coordinate[label=left:{\contour{white}{\( I_4 \)}}] (l4)  at  (a4);
            \coordinate[label=left:\( I_5 \)]                    (l5)  at  (a5);
            \coordinate[label=left:{\contour{white}{\( I_6 \)}}] (l6)  at  (a6);
            \coordinate[label=left:{\contour{white}{\( I_7 \)}}] (l7)  at  (a7);

            \draw[line width=0.4,color=gray!50] ( 0, 0) -- ( 0, 3.75);
            \draw[line width=0.4,color=gray!50] ( 1, 0) -- ( 1, 3.75);
            \draw[line width=0.4,color=gray!50] ( 2, 0) -- ( 2, 3.75);
            \draw[line width=0.4,color=gray!50] ( 3, 0) -- ( 3, 3.75);
            \draw[line width=0.4,color=gray!50] ( 4, 0) -- ( 4, 3.75);
            \draw[line width=0.4,color=gray!50] ( 5, 0) -- ( 5, 3.75);
            \draw[line width=0.4,color=gray!50] ( 6, 0) -- ( 6, 3.75);
            \draw[line width=0.4,color=gray!50] ( 7, 0) -- ( 7, 3.75);
            \draw[line width=0.4,color=gray!50] ( 8, 0) -- ( 8, 3.75);
            \draw[line width=0.4,color=gray!50] ( 9, 0) -- ( 9, 3.75);
            \draw[line width=0.4,color=gray!50] (10, 0) -- (10, 3.75);
            \draw[line width=0.4,color=gray!50] (11, 0) -- (11, 3.75);
            \draw[line width=0.4,color=gray!50] (12, 0) -- (12, 3.75);
            \draw[line width=0.4,color=gray!50] (13, 0) -- (13, 3.75);

            \coordinate[label=above:\color{gray!50}{\(  0 \)}]  (i0) at ( 0, 3.75);
            \coordinate[label=above:\color{gray!50}{\(  1 \)}]  (i1) at ( 1, 3.75);
            \coordinate[label=above:\color{gray!50}{\(  2 \)}]  (i2) at ( 2, 3.75);
            \coordinate[label=above:\color{gray!50}{\(  3 \)}]  (i3) at ( 3, 3.75);
            \coordinate[label=above:\color{gray!50}{\(  4 \)}]  (i4) at ( 4, 3.75);
            \coordinate[label=above:\color{gray!50}{\(  5 \)}]  (i5) at ( 5, 3.75);
            \coordinate[label=above:\color{gray!50}{\(  6 \)}]  (i6) at ( 6, 3.75);
            \coordinate[label=above:\color{gray!50}{\(  7 \)}]  (i7) at ( 7, 3.75);
            \coordinate[label=above:\color{gray!50}{\(  8 \)}]  (i8) at ( 8, 3.75);
            \coordinate[label=above:\color{gray!50}{\(  9 \)}]  (i9) at ( 9, 3.75);
            \coordinate[label=above:\color{gray!50}{\( 10 \)}] (i10) at (10, 3.75);
            \coordinate[label=above:\color{gray!50}{\( 11 \)}] (i11) at (11, 3.75);
            \coordinate[label=above:\color{gray!50}{\( 12 \)}] (i12) at (12, 3.75);
            \coordinate[label=above:\color{gray!50}{\( 13 \)}] (i13) at (13, 3.75);

          \end{tikzpicture}

      \caption{%
        A visualization of an interval ordering of the intervals \( I_1, \ldots, I_7 \) from top to bottom.
        The exposed parts of the intervals are drawn with dashed lines.
        For example, the interval \( I_6 = \interval{1, 11} \) has the exposed part \( \interval{4, 5} \cup \interval{7, 8} \).
        The other parts of \( I_6 \) (solid lines) are not exposed as they are covered by the intervals \( I_1, \ldots, I_5 \) which come before \( I_6 \) in the ordering.
      }

      \label{fig:fio-example}
    \end{figure}

    The \fIO problem is motivated by a problem in biochemistry called protein folding, that is, computing the spatial structure of a protein (a large molecule consisting of many different atoms linked together).
    A crucial part of that problem is to reconstruct the protein backbone, that is, to determine the position of the main chain of atoms.
    By physically measuring distances between some pairs of atoms, one can reconstruct their spatial structure (a special case of the well-studied distance geometry problem~\cite{braun1987distance, duxbury2016unassigned, lavor2012recent, liberti2021cycle, more1997global}).
    This problem can be simplified to a certain bitstring reconstruction problem where the goal is to find a binary string such that for given index intervals, the corresponding substring equals an unknown string.
    Equality can only be tested via an oracle, which means that one has to try all possible substrings in the worst case.
    In short, \fIO with integral intervals and $f(x)=2^x$ corresponds to finding a process order of the index intervals which minimizes the overall number of oracle tests (see Dürr et al.~\cite[Section~2]{durr2012interval} for a detailed exposition and further literature references).
    Besides this direct application, \fIO is a natural interval scheduling problem which might find further applications and which we believe to be of independent interest on its own.
    
    Dürr et al.~\cite{durr2012interval} showed that \fIO is \NPhard{} for a rather artificial (discontinuous non-monotonic) cost function~$f$, but solvable in polynomial time if the input intervals are \emph{agreeable}, that is, there is an ordering induced by the left interval end points which is the same as an ordering induced by the right end points.
    They also prove polynomial-time solvability for \emph{laminar} intervals (that is, for every two intervals~$I_i,I_j$, either $I_i\cap I_j=\emptyset$ or one is contained in the other) if the function~$f-f(0)$ is superadditive (note that their \NPhard{ness} result holds for laminar intervals).
    Most importantly, they posed the open question whether the problem is polynomial-time solvable for the cost function $f(x)=2^x$, which constitutes the motivating use-case in biochemistry.

    \subparagraph*{Our Results.}
    In this work, we answer this question positively showing that \fIO can be solved in polynomial time for all functions~$f$ for which~$f-f(0)$ is superadditive.
    Our main contribution is a dynamic programming algorithm described in \cref{sec:arbitrary-cost} that solves \fIO for arbitrary cost functions in \( \O{n^3 \cdot 2^n} \) time (assuming~$f$ can be computed in constant time).
    The algorithm is based on the key idea to consider all possible exposed (and covered) parts instead of the intervals themselves.
    This view allows us to do a refined analysis of the structure of exposed parts in \cref{sec:polynomial} to obtain polynomial-time algorithms for all cost functions~$f$ where~$f-f(0)$ is subadditive or superadditive.
    We also show that the case of pairwise intersecting intervals can be solved in polynomial time for arbitrary cost functions.

    We complement our algorithmic results in \cref{sec:hardness} by showing a worst-case running time lower bound of~$2^{n-1}$ for every algorithm solving \fIO for arbitrary~$f$ with only oracle access to~$f$.
    Notably, this result is also based on considering exposed parts.
    Hence, our results reveal that the exposed parts are essential for understanding the complexity of the problem.
    Moreover, we prove \NPhard{ness} for strictly monotone continuous piecewise linear functions and also for subadditive and superadditive functions.
    Thus, we draw a more precise picture of the complexity landscape for \fIO (see \cref{tab:results} for an overview).
    Finally, in \cref{sec:parameter}, we discuss some parameterized complexity results.

    \newcommand{\acc}{\rowcolor{colorSecondary!8}}
    \newcommand{\tblttl}[1]{{\selectfont\sffamily\bfseries#1}}
    \begin{table}[!t]
      \centering
      \footnotesize
      \def\arraystretch{1.3}
      \caption{%
        Overview of the computational complexity of \fIO ($n$ denotes the number of input intervals).
        The running time to compute~$f$ is neglected.
      }
      \belowrulesep = 5pt
      \setlength{\tabcolsep}{2.5mm}
      \begin{tabular}{l@{\hspace{5.5mm}}l@{\hspace{5.5mm}}l}
        \tblttl{Cost Function~$f$}      & \tblttl{Intervals}    & \tblttl{Complexity}                                             \vspace{2pt}      \\
        \midrule
        \acc $f-f(0)$ superadditive     & laminar               & \makebox[1.5cm][l]{\( \O{n \log{n}} \)}   (\cite[Thm.~3]{durr2012interval})       \\
             continuous convex          & agreeable             & \makebox[1.5cm][l]{\( \O{n^2} \)}         (\cite[Thm.~2]{durr2012interval})       \\
        \acc arbitrary                  & agreeable             & \makebox[1.5cm][l]{\( \O{n^3} \)}         (\cite[Thm.~1]{durr2012interval})       \\
             $f-f(0)$ subadditive       & arbitrary             & \makebox[1.5cm][l]{\( \O{n^5} \)}         (\cref{cor:subadditive})                \\
        \acc $f-f(0)$ superadditive     & arbitrary             & \makebox[1.5cm][l]{\( \O{n^5} \)}         (\cref{thm:superadditive})              \\
             arbitrary                  & pairwise intersecting & \makebox[1.5cm][l]{\( \O{n^6} \)}         (\cref{thm:pairwise-intersecting})      \\
        \acc strict.\ mon.\ cont.\ piecewise linear & laminar   & \makebox[1.5cm][l]{\NPhard{}}             (\cref{cor:general-reduction})          \\
             subadditive/superadditive  & laminar               & \makebox[1.5cm][l]{\NPhard{}}             (\cref{cor:sub-superadditive-hardness}) \\
        \acc arbitrary                  & laminar (integral)    & \makebox[1.5cm][l]{\NPhard{}}             (\cite[Thm.~8]{durr2012interval})       \\
             arbitrary                  & arbitrary             & \makebox[1.5cm][l]{\( \O{n^3\cdot2^n} \)} (\cref{thm:general-time})
      \end{tabular}
      \vspace{-1em}
      \label{tab:results}
    \end{table}

  \section{Preliminaries}
    \label{sec:preliminaries}

    For $n\in\mathbb N$, we define~$[n]\coloneqq\{1,\ldots,n\}$.
    A function~$f\colon\nonneg\to\real$ is \emph{superadditive} (\emph{subadditive}) if~$\f{x+y}\ge \f{x}+\f{y}$ ($\f{x+y}\le\f{x}+\f{y}$) for all~$x,y\in\nonneg$.

    We consider intervals of the form \( I = \interval{a, b} \subset \real \) with \( a < b \in \real \).
    We call~\( a \) the \emph{start point} and~\( b \) the \emph{end point} of~\( I \).
    The \emph{length}~\( \abs{I} \) of~\( I \) is defined as \( b - a \).
    We say that two intervals~\( I_i \) and~\( I_j \) \emph{touch} each other if they are disjoint and their union is an interval.
    For a set \( \I = \c{I_1, \ldots, I_n} \) of intervals and a permutation~$\sigma\in S_n$, we call the tuple \( \r{I_{\sig{1}}, \ldots, I_{\sig{n}}} \) a \emph{sequence} of~\( \I \).
    We say that \( I_i \) \emph{comes before} \( I_j \) if \( \fu{\sigma^{-1}}{i} < \fu{\sigma^{-1}}{j} \).
    For a subset~\( \I'\subseteq \I \), we call \( C(\I') \coloneqq \bigcup_{I\in\I'}{I} \) the \emph{covered area} of~\( \I' \).
    Note that~$C(\I')$ is a union of some pairwise disjoint intervals.
    We call the intervals in the smallest set of intervals whose union equals~$C(\I')$ the \emph{interval components} of~\( C(\I') \) (an empty covered area has no interval components).
    If~$C(\I')$ is an interval, then we call it a \emph{covered interval}.
    We define~\( \C_\I\coloneqq\{C(\I')\mid \I'\subseteq \I\}\).
    Note that the exposed part~\( E_{\sigma,j} \) of~$I_j$ is exactly $I_j\setminus C(\I')$, where~$\I'=\{I_i\mid \fu{\sigma^{-1}}{i} < \fu{\sigma^{-1}}{j}\}$.
    Again, we call the intervals of the smallest set of intervals whose union is~\( E_{\sigma,j} \), the \emph{interval components} of~\( E_{\sigma,j} \).
    We define~\( \E_\I\coloneqq\{E_{\sigma,j}\mid I_j\in\I, \sigma\in S_n\}\setminus\{\emptyset\}\).

    Throughout this work, we assume that the cost function~$f\colon \nonneg \to \real$ is computable and we neglect the time required to compute it and also the running times for arithmetic operations.

  \section{Arbitrary Cost Functions}
    \label{sec:arbitrary-cost}

    In this section, we develop an exponential-time algorithm computing the optimal cost for arbitrary cost functions~$f$.
    To start with, note that there is a simple standard dynamic programming approach to solve the problem in~$\O{2^n\cdot\text{poly}(n)}$ time:
    Use a table to store optimal costs for each subset~$\I'\subseteq\I$ of input intervals.
    Clearly, each entry can be computed via
    \[
      \opt\r{\I'}=\min_{I_j\in \I'}\c{\opt\r{\I'\setminus\c{I_j}}+f\r{\abs{I_j\setminus\bigcup_{I_i\in\I'\setminus\c{I_j}}I_i}}}\text{.}
    \]
    However, it is not clear how to adapt this approach to obtain polynomial time for special cost functions (since the table has exponential size).
    Therefore, we develop a different approach based on analyzing the structure of optimal solutions in terms of exposed and covered parts.
    This leads to a dynamic program with a polynomial-size table which allows us to derive polynomial-time algorithms for special cases in \Cref{sec:polynomial}.

    First, note that if~$C(\I)$ has multiple interval components, then we can clearly solve the corresponding subinstances independently and sum up their optimal costs.
    Hence, we assume that~$C(\I)$ is an interval.
    For \(X \subseteq\real\), let \( \I_X \coloneqq \c{I \in \I \mid I \subseteq X} \) be the subset of~\( \I \) induced by~\( X \) and let~\(\opt(\I')\) denote the optimal cost of \( \I'\subseteq \I \) (where~$\opt(\emptyset)\coloneqq 0$).
    The following lemma describes how an optimal solution can be recursively computed based on the possible exposed parts~$\E_\I$.

    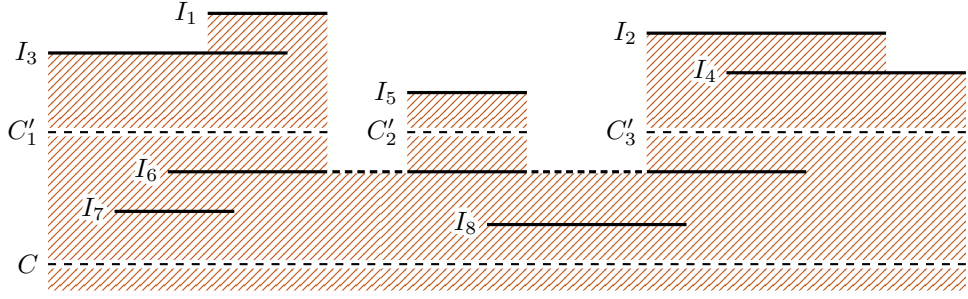
\begin{figure}
      \centering
      \pgfmathsetmacro{\figscale}{\textwidth/40cm}
        \begin{tikzpicture}[scale=\figscale]
          \draw[color=white] (-1, 0) -- (35.5, 0);

          \coordinate (a1)  at ( 6,   10.5 );
          \coordinate (a13) at ( 6,    9   );
          \coordinate (b1)  at (10.5, 10.5 );
          \coordinate (b16) at (10.5,  4.5 );
          \coordinate (a2)  at (22.5,  9.75);
          \coordinate (a26) at (22.5,  4.5 );
          \coordinate (b2)  at (31.5,  9.75);
          \coordinate (b24) at (31.5,  8.25);
          \coordinate (a3)  at ( 0,    9   );
          \coordinate (a30) at ( 0,    0   );
          \coordinate (b3)  at ( 9,    9   );
          \coordinate (a4)  at (25.5,  8.25);
          \coordinate (b4)  at (34.5,  8.25);
          \coordinate (b40) at (34.5,  0   );
          \coordinate (a5)  at (13.5,  7.5 );
          \coordinate (a56) at (13.5,  4.5 );
          \coordinate (b5)  at (18,    7.5 );
          \coordinate (b56) at (18.0,  4.5 );
          \coordinate (a6)  at ( 4.5,  4.5 );
          \coordinate (b6)  at (28.5,  4.5 );
          \coordinate (a7)  at ( 2.5,  3   );
          \coordinate (b7)  at ( 7,    3   );
          \coordinate (a8)  at (16.5,  2.5 );
          \coordinate (b8)  at (24,    2.5 );
          \coordinate (ca)  at ( 0,    1   );
          \coordinate (cb)  at (34.5,  1   );
          \coordinate (c1a) at ( 0,    6   );
          \coordinate (c1b) at (10.5,  6   );
          \coordinate (c2a) at (13.5,  6   );
          \coordinate (c2b) at (18,    6   );
          \coordinate (c3a) at (22.5,  6   );
          \coordinate (c3b) at (34.5,  6   );

          \fill[
              pattern=north east lines,
              pattern color=colorSecondary
          ]  (a30) -- (a3)%
                    -- (a13) -- (a1)%
                            -- (b1)%
                -- (b16)%
                -- (a56) -- (a5)%
                        -- (b5)%
                -- (b56)%
                -- (a26)     -- (a2)%
                            -- (b2)%
                  -- (b24)%
                  -- (b4)%
          -- (b40);

          \draw[line width=1.2] (a1)  --  (b1);
          \draw[line width=1.2] (a2)  --  (b2);
          \draw[line width=1.2] (a3)  --  (b3);
          \draw[line width=1.2] (a4)  --  (b4);
          \draw[line width=1.2] (a5)  --  (b5);
          \draw[line width=1.2] (a6)  -- (b16);
          \draw[line width=1.2] (a56) -- (b56);
          \draw[line width=1.2] (a26) --  (b6);
          \draw[line width=1.2] (a7)  --  (b7);
          \draw[line width=1.2] (a8)  --  (b8);

          \dashedline{1.2}{b16}{a56}{2.7}{1.8}{1}{1}
          \dashedline{1.2}{b56}{a26}{2.7}{1.8}{1}{1}

          \draw[line width=3.2, color=white] (ca)  --  (cb);
          \draw[line width=3.2, color=white] (c1a) -- (c1b);
          \draw[line width=3.2, color=white] (c2a) -- (c2b);
          \draw[line width=3.2, color=white] (c3a) -- (c3b);

          \dashedline{0.8}{ca}{cb}{3}{3}{0}{0}
          \dashedline{0.8}{c1a}{c1b}{3}{3}{0}{0}
          \dashedline{0.8}{c2a}{c2b}{3}{3}{0}{0}
          \dashedline{0.8}{c3a}{c3b}{3}{3}{0}{0}

          \contourlength{1.5pt}
          \contournumber{64}
          \coordinate[label=left:\( I_1 \)]                    (l1)  at  (a1);
          \coordinate[label=left:\( I_2 \)]                    (l2)  at  (a2);
          \coordinate[label=left:\( I_3 \)]                    (l3)  at  (a3);
          \coordinate[label=left:{\contour{white}{\( I_4 \)}}] (l4)  at  (a4);
          \coordinate[label=left:\( I_5 \)]                    (l5)  at  (a5);
          \coordinate[label=left:{\contour{white}{\( I_6 \)}}] (l6)  at  (a6);
          \coordinate[label=left:{\contour{white}{\( I_7 \)}}] (l7)  at  (a7);
          \coordinate[label=left:{\contour{white}{\( I_8 \)}}] (l8)  at  (a8);
          \coordinate[label=left:\( C \)]                      (lc)  at  (ca);
          \coordinate[label=left:\( C'_1 \)]                   (lc1) at (c1a);
          \coordinate[label=left:\( C'_2 \)]                   (lc2) at (c2a);
          \coordinate[label=left:\( C'_3 \)]                   (lc3) at (c3a);

        \end{tikzpicture}

      \caption{%
        An example sequence \( \r{I_1, \ldots, I_8} \) (higher is earlier in the sequence).
        The covered areas are shaded (intervals $I_1, \ldots, I_8$ cover the interval~\( C \)).
        The last interval in the sequence with non-empty exposed part (dashed) is~\( I_6 \).
        The intervals before~\( I_6 \) cover the area \( C' = C_1' \cup C_2' \cup C_3' \).
        The cost of the sequence is just the sum of the costs of the subsequences that cover \( C_1', C_2', C_3' \), the cost of the exposed part of~\( I_6 \), and the cost of all completely covered intervals after~\( I_6 \) which are not covered by~$C'$, that is, \( I_8 \).
        Note that~$I_7$ is already considered in the cost of covering~$C_1'$ since it could also be moved directly before~$I_6$.
      }
      \label{fig:decomposition}
    \end{figure}

    \begin{lemma}
      \label{lem:recursion}
      Let~$\I$ be intervals such that~$C\coloneqq C(\I)$ is an interval.
      Then,
      \[
        \opt(\I) = \min\c{\f{\abs{E}} + \r{\abs{\I} - \abs{\I_{C \setminus E}} - 1} \cdot \f{0} + \opt(\I_{C \setminus E})\mid E\in\E_\I, C\setminus E \in \C_\I}\text{.}
      \]
    \end{lemma}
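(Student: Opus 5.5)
We prove the equality by showing both inequalities, i.e.\ ``$\ge$'' (every ordering costs at least some term of the minimum) and ``$\le$'' (every term is achieved by some ordering). The decomposition follows \cref{fig:decomposition}.

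For ``$\ge$'', fix an optimal ordering~$\sigma$ of~$\I$. Let~$I_k$ be the last interval in~$\sigma$ with nonempty exposed part; it exists since~$C$ is nonempty. Let~$E\coloneqq E_{\sigma,k}\in\E_\I$ and let~$\I'$ be the set of intervals before~$I_k$.
\begin{itemize}
\item Every interval after~$I_k$ has empty exposed part, so $C(\I'\cup\{I_k\})=C$. Hence $C(\I')\supseteq C\setminus E$. Since $E=I_k\setminus C(\I')$ is disjoint from~$C(\I')$, we get $C(\I')=C\setminus E$, so $C\setminus E\in\C_\I$.
\item Every interval of~$\I'$ lies in~$C(\I')=C\setminus E$, so $\I'\subseteq\I_{C\setminus E}$. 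Moreover $I_k\notin\I_{C\setminus E}$ because~$E\neq\emptyset$.
\item Each interval of $\I_{C\setminus E}\setminus\I'$ comes after~$I_k$ and therefore costs $f(0)$.
\end{itemize}
So the total cost of~$\sigma$ equals
\[
f(|E|)+(|\I|-|\I_{C\setminus E}|-1)f(0)+\Bigl[\mathrm{cost}_\sigma(\I')+(|\I_{C\setminus E}|-|\I'|)f(0)\Bigr],
\]
where $\mathrm{cost}_\sigma(\I')$ is the cost of~$\sigma$ restricted to~$\I'$. Exposed parts of intervals in~$\I'$ depend only on intervals before them, all of which lie in~$\I'$, so this restriction is well defined. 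The bracket is exactly the cost of the ordering of~$\I_{C\setminus E}$ that takes~$\sigma$ restricted to~$\I'$ and then appends the remaining intervals. Those remaining intervals are contained in $C(\I')=C\setminus E$, so they indeed cost~$f(0)$. Hence the bracket is at least $\opt(\I_{C\setminus E})$, which gives ``$\ge$''.

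For ``$\le$'', take any $E\in\E_\I$ with $C\setminus E\in\C_\I$. Because $C\setminus E\in\C_\I$, some subset has covered area $C\setminus E$. Every such subset lies in~$\I_{C\setminus E}$, so $C(\I_{C\setminus E})=C\setminus E$.

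Since $E\in\E_\I$, there is an interval~$I_j$ and a set~$\I''$ of intervals with $E=I_j\setminus C(\I'')$. Then $E\subseteq I_j$ and $E\cap C(\I'')=\emptyset$, so $C(\I'')\subseteq C\setminus E$. Also $I_j\notin\I_{C\setminus E}$ since $E\neq\emptyset$.

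Build the ordering as follows:
\begin{enumerate}
\item an optimal ordering of~$\I_{C\setminus E}$;
\item then~$I_j$, whose exposed part is $I_j\setminus(C\setminus E)=I_j\cap(E\cup(\real\setminus C))=E$, using $I_j\subseteq C$;
\item then all remaining intervals. Each lies in $C=(C\setminus E)\cup E\subseteq C(\I_{C\setminus E}\cup\{I_j\})$, so each costs~$f(0)$.
\end{enumerate}
There are $|\I|-|\I_{C\setminus E}|-1$ remaining intervals, so this ordering achieves exactly the term for~$E$, which gives ``$\le$''.

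The main subtlety is the identification $C(\I')=C\setminus E$ in the first direction. It relies on the choice of~$I_k$ as the last interval with nonempty exposed part, together with the disjointness of~$E$ from~$C(\I')$.
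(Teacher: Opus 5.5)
Your proof is correct and follows the paper's argument: the last interval with nonempty exposed part gives ``$\ge$'', and an optimal order of $\I_{C\setminus E}$ followed by an interval containing $E$ and then the rest gives ``$\le$''. The only difference is that you fold the intervals of $\I_{C\setminus E}$ that come after $I_k$ directly into the bracketed cost, whereas the paper first moves them before $I_k$ without loss of generality; the two amount to the same thing.
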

    \begin{proof}
      Consider an optimal ordering~$\sigma$ and let the corresponding sequence be \( \r{I_1, \ldots, I_n} \) without loss of generality.
      Let~\( I_k \) be the last interval that is not completely covered, that is, \( E_{\sigma,k}\neq\emptyset\) (see \Cref{fig:decomposition}).
      Further, let~$\I'\coloneqq\{I_1,\ldots,I_{k-1}\}$ and \( C' \coloneqq C(\I')=C \setminus E_{\sigma,k} \).
      Note that we can assume that no interval from $I_{k+1},\ldots, I_n$ is completely covered by~$C'$ (that is, $\I'=\I_{C'}$) since otherwise we could move it directly before~$I_k$ in~$\sigma$ without changing the cost (since the cost is~$f(0)$ in both cases).

      Hence, we can decompose $\opt(\I)$ as follows:
      First, we have the cost of the intervals \(I_{k+1}, \ldots, I_n \), which are all completely covered.
      Hence, their cost is \( \r{n-k}\cdot \f{0} =(|\I|-|\I_{C'}|-1)\f{0}\).
      Second, we have the cost of~\( I_k \), which is \( \f{\abs{E_{\sigma,k}}} \).
      Finally, we have the cost of the subsequence \(\r{I_1, \ldots, I_{k-1}} \).
      Together, this yields
      \[
        \opt(\I)=\f{\abs{E_{\sigma,k}}} + (|\I|-|\I_{C'}|-1)\cdot\f{0} + \opt(\I_{C'})\text{,}
      \]
      which implies
      \[
        \opt(\I) \ge \op{\min}{\f{\abs{E}} + \r{\abs{\I} - \abs{\I_{C \setminus E}} - 1} \cdot \f{0} + \opt(\I_{C \setminus E})\mid E\in\E_\I, C\setminus E \in \C_\I}\text{.}
      \]
      The reverse inequality ``$\le$'' easily holds since for every~$E\in\E_\I$ with~$C\setminus E\in \C_\I$ there exists a sequence $\sigma$ which achieves the cost
      \[
        \f{\abs{E}} + \r{\abs{\I} - \abs{\I_{C \setminus E}} - 1} \cdot \f{0} + \opt(\I_{C \setminus E})\text{.}
      \]
      First, order all intervals in $\I_{C\setminus E}$ optimally.
      Then, take a remaining interval which covers~$E$ followed by all other remaining intervals.
    \end{proof}

    Note that~\( \opt(\I_{C \setminus E}) \) can again be computed as the sum of optimal costs of the individual interval components of \( C \setminus E \).
    That is, to compute \(\opt(\I_C)\) for some covered interval~$C$, we only need to compute optimal costs of subinstances corresponding to proper subintervals of~$C$.
    This leads to a dynamic program which computes optimal costs for all covered intervals in a topological order with respect to the proper subset relation.
    To this end, let~$\C^*_\I\subseteq\C_\I$ be the subset of all covered areas which are intervals (that is, covered intervals) and let the table $\OPT{}$ contain an entry~$\OPT{C}\coloneqq \opt(\I_C)$ for each~$C\in\C^*_\I$.
    \cref{alg:compute-opt} shows the pseudocode for filling the table $\OPT{}$.

    \begin{algorithm}[t]
      \caption{Computing optimal costs for arbitrary cost functions~$f$}
      \ForEach{covered interval \( C \in \C^*_\I \) in topological order}{%
        \( \OPT{C} \leftarrow \infty \)\;
        \( k \leftarrow \abs{\I_C} - 1 \)\;
        \ForEach{\( E \in \E_{\I} \)\label{line:min-loop}}{%
          \If{there is no $I\in\I$ with $E \subseteq I \subseteq C$\label{line:checkE}}{%
            \Continue with next~\( E \)\;
          }
          \( C' \leftarrow C \setminus E \)\;\label{line:c-e}
          \( \omega \leftarrow \f{\abs{E}} \)\;
          \ForEach{interval component~\( C_l' \) of~\( C' \)}{%
            \If{\( C'_l \notin \C^*_\I \)\label{line:checkC'}}{%
                \Continue with next~\( E \)\;
              }
              \( k \leftarrow k - \abs{\I_{C'_l}} \)\;
              \( \omega \leftarrow \omega + \OPT{C'_l} \)\;\label{line:lookup}
            }
          \( \omega \leftarrow \omega + k \cdot \f{0} \)\;
          \( \OPT{C} \leftarrow \mi{\OPT{C}, \omega} \)\;\label{line:min}
        }
      }
      \label{alg:compute-opt}
    \end{algorithm}

    \begin{lemma}
      \label{lem:DP-correct}
      \cref{alg:compute-opt} correctly computes the optimal costs.
    \end{lemma}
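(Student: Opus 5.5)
We prove the claim by induction over the topological order in which \cref{alg:compute-opt} processes the covered intervals. The claim is that after processing $C\in\C^*_\I$ we have $\OPT{C}=\opt(\I_C)$. The base case is covered automatically: a minimal covered interval has no proper covered subintervals, so no lookups into $\OPT{}$ are needed. For the induction step, fix $C\in\C^*_\I$ and assume that all entries $\OPT{C''}$ with $C''\in\C^*_\I$, $C''\subsetneq C$ are already correct. We apply \cref{lem:recursion} to the instance $\I_C$. Its covered area is $C$: since $C\in\C^*_\I$, some subset of $\I$ covers exactly $C$, and every member of that subset lies in $\I_C$. It then remains to show that the loop in line~\ref{line:min-loop} computes exactly the minimum from \cref{lem:recursion} for $\I_C$.

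\textbf{Matching the index sets.} \cref{lem:recursion} for $\I_C$ ranges over $E\in\E_{\I_C}$ with $C\setminus E\in\C_{\I_C}$, whereas the algorithm ranges over $E\in\E_\I$ and filters with lines~\ref{line:checkE} and~\ref{line:checkC'}. We will show the two sets coincide.

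For ``$\subseteq$'', take $E=E_{\sigma,j}$ for some ordering $\sigma$ of $\I_C$. Extend $\sigma$ to an ordering of $\I$ by appending all intervals not in $\I_C$ at the end. This leaves every exposed part of an interval in $\I_C$ unchanged, so $E\in\E_\I$, and $I_j$ witnesses line~\ref{line:checkE}. Further, $C\setminus E$ is a union of members of $\I_C$. Each of its components $C'_l$ is then a union of members of $\I_C$ contained in $C'_l$, hence $C'_l\in\C^*_\I$, and line~\ref{line:checkC'} passes.

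For ``$\supseteq$'', suppose both checks pass. Order the intervals of $\I_{C\setminus E}$ first, then the interval $I$ with $E\subseteq I\subseteq C$. We need $I$ to have exposed part exactly $E$, i.e.\ $I\setminus(C\setminus E)=E$, which holds because $E\subseteq I\subseteq C$. Then append the rest of $\I_C$. Hence $E\in\E_{\I_C}$. Moreover, $C\setminus E$ is the union of its components $C'_l\in\C^*_\I$, and each $C'_l$ is a union of intervals of $\I_C$, so $C\setminus E\in\C_{\I_C}$.

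This matching of index sets is the step I expect to be the main obstacle. The rest of the argument is bookkeeping.

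\textbf{Matching the value.} Since the components $C'_l$ are pairwise disjoint and any interval contained in $C\setminus E$ lies in exactly one component, the instance splits: $\opt(\I_{C\setminus E})=\sum_l\opt(\I_{C'_l})$ and $\abs{\I_{C\setminus E}}=\sum_l\abs{\I_{C'_l}}$. Each $C'_l\subsetneq C$ because $E\neq\emptyset$, so by induction the lookups in line~\ref{line:lookup} return $\opt(\I_{C'_l})$. The counter $k$ therefore ends at $\abs{\I_C}-\abs{\I_{C\setminus E}}-1$, and $\omega$ equals the term of \cref{lem:recursion}; we should reset $k$ per $E$, reading the pseudocode as intended. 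Taking the minimum in line~\ref{line:min} yields $\opt(\I_C)$.

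Finally, $\opt(\I)$ is the sum of $\OPT{}$ over the components of $C(\I)$, each of which lies in $\C^*_\I$.
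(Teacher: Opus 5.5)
Your proof is correct and follows the same route as the paper: you process the covered intervals in topological order, apply \cref{lem:recursion} to $\I_C$, and check that lines~\ref{line:checkE} and~\ref{line:checkC'} filter $\E_\I$ down to exactly that recursion's index set; you verify this in both directions more carefully than the paper, which only asserts it. Your remark that $k$ must be reset to $\abs{\I_C}-1$ for each $E$ is a genuine correction to the pseudocode that the paper's proof passes over.
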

    \begin{proof}
      The table~$\OPT{}$ is filled in a topological order on the covered intervals~$\C^*_\I$.
      By \cref{lem:recursion}, for each~$C\in\C^*_\I$, it holds:
      \begin{equation*}
        \OPT{C} = \min\c{\f{\abs{E}} + \r{\abs{\I_C} - \sum_{l = 1}^{m}{\abs{\I_{C'_l}}} - 1} \cdot \f{0} + \sum_{l = 1}^{m}{\OPT{C_l'}}}\text{,}
      \end{equation*}
      where~$C'_1,\ldots,C'_m\subseteq C\setminus E$ are the~$m\le n$ interval components of~$C\setminus E$ and the minimum is considered over all exposed parts $E\in\E_{\I_C}$ such that $C\setminus E\in\C_{\I_C}$.
      This minimum is computed via the for-loop in Line~\ref{line:min-loop} which iterates over all exposed parts~$E\in\E_{\I}$.
      In order to check that~$E\in\E_{\I_C}$, we first check in Line~\ref{line:checkE} whether~$E\subseteq I\subseteq C$ for some~$I\in \I$ and the condition~$C\setminus E\in\C_{\I_C}$ is checked in Line~\ref{line:checkC'} by testing each interval component for containment in~$\C^*_\I$.
      Note that in the case that~$E=C$ we set~$C'$ to~$\emptyset$ in Line~\ref{line:c-e} and~$C'$ has no interval components.
      Thus, we correctly use the cost~$\omega=\f{|C|}+\r{\abs{\I_C}-1}\cdot\f{0}$ for the minimum in Line~\ref{line:min} updating~$\OPT{C}$.
      The values~$\OPT{C_l'}$ are already computed when accessed in Line~\ref{line:lookup}
      since we iterate over~$\C^*_\I$ in topological order.
      Hence, after termination, the correct optimal cost of~$\I$ is stored in~$\OPT{C(\I)}$ (assuming that $C(\I)$ is an interval).
    \end{proof}

    Note that an optimal interval ordering can also be reconstructed when storing the optimal exposed part~$E$ for each covered interval~$C$ in \cref{alg:compute-opt}.

    As regards the running time, observe first that $\abs{\C^*_\I}\in\O{n^2}$ since each covered interval starts at one of the~$n$ start points and ends at one of the~$n$ end points of the intervals in~$\I$.
    We run a simple preprocessing to compute the set~$\C^*_\I$.
    \begin{lemma}
      \label{lem:preproc}
      We can compute all covered intervals $C\in\C^*_\I$ (stored in topological order with respect to proper subset relation) and the sizes~$|\I_C|$ in~$\O{n^3}$ time.
    \end{lemma}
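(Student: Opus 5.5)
We work with the $\O{n^2}$ candidate intervals $[a_i,b_j)$, where $a_i$ is a start point and $b_j$ an end point of intervals in~$\I$ with $a_i<b_j$; every covered interval is of this form. For such a candidate $C=[a,b)$ we compute $\I_C$ and test whether $C(\I_C)=C$. If $C$ is a covered interval, $C=C(\I')$ for some $\I'\subseteq\I$, and then $\I'\subseteq\I_C$, which gives $C\subseteq C(\I_C)\subseteq C$. The converse is immediate. Hence $C\in\C^*_\I$ if and only if $C(\I_C)=C$, so $\C^*_\I$ is exactly the set of candidates that pass this test.

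To run the test in $\O{n}$ time per candidate, we first sort all intervals by start point once, in $\O{n\log n}$ time. For a fixed candidate $C=[a,b)$ we scan the sorted list and keep only the intervals with $a\le a_k$ and $b_k\le b$. While scanning, we count them to obtain $|\I_C|$, and we check that their union is exactly $[a,b)$. For the union check we maintain the current right frontier $r$, starting with $r=a$. Each kept interval $[a_k,b_k)$ must satisfy $a_k\le r$, otherwise there is a gap; if it does, we set $r\leftarrow\max(r,b_k)$. At the end we require $r=b$. Since the intervals come in order of start point, this correctly decides whether the union is the whole interval $[a,b)$. Each candidate therefore costs $\O{n}$, and all $\O{n^2}$ candidates cost $\O{n^3}$ in total.

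For the topological order with respect to the proper subset relation, we sort the covered intervals by length in increasing order. If $C'\subsetneq C$, then $|C'|<|C|$, so $C'$ comes before $C$. Sorting $\O{n^2}$ items takes $\O{n^2\log n}\subseteq\O{n^3}$ time. Alternatively, we can bucket the candidates by the pair (index of the start point, index of the end point) and order the pairs by their difference.

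The only point that needs care is the characterization $C\in\C^*_\I \iff C(\I_C)=C$. With it, we never enumerate subsets of~$\I$, and the rest of the procedure is a routine per-candidate linear scan. We also store the lists $\I_C$, or at least the sizes $|\I_C|$, so that \cref{alg:compute-opt} can use them.
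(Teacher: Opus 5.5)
Your proof is correct and takes essentially the same route as the paper. You enumerate the $O(n^2)$ start/end-point pairs, keep only the intervals inside $[a,b)$, decide coverage and count $|\I_C|$ with a linear sweep over the sorted intervals, and then sort the covered intervals topologically. The only differences are cosmetic: you track a right frontier instead of the number of currently open intervals, you order by length instead of by decreasing start point and then increasing end point, and you state explicitly the characterization $C\in\C^*_\I \iff C(\I_C)=C$, which the paper uses only implicitly.
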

    \begin{proof}
      We first sort the input intervals by start point and by end point in $\O{n\log n}$ time.
      For each start point~$a$ and each endpoint~$b > a$, we check whether $[a,b)$ is a covered interval in $\O{n}$ time as follows:
      Clearly, we can ignore all intervals starting before~$a$ or ending after~$b$.
      For the remaining intervals, we need to check whether they cover~$[a,b)$.
      To this end, we iterate over all start and end points in ascending order and keep track of how many intervals have started but have not ended yet.
      If at some point~$p$ between~$a$ and~$b$ all intervals have ended and no interval is starting at~$p$, then the interval~$[a,b)$ is not covered.
      Otherwise, we store the covered interval~$[a,b)$ and also the number~$|\I_{[a,b)}|$ of contained intervals.

      For the topological sorting, we sort all covered intervals in~$\C^*_\I$ in $\O{n^2\log n}$ time such that start points decrease and end points increase for identical start points.
    \end{proof}

    The crucial part of the running time turns out to be the computation of the set~$\E_\I$ of exposed parts.

    \begin{lemma}
      \label{lem:compute-exposed}
      The set~\( \E_\I \) has size $\O{2^n}$ and can be computed in \( \O{n^2 \cdot 2^n} \) time.
    \end{lemma}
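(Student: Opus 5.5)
Every exposed part $E=E_{\sigma,j}$ is $I_j\setminus C(\I')$, where $\I'$ is the set of intervals preceding $I_j$ in~$\sigma$. Since $\I'$ is an arbitrary subset of $\I\setminus\{I_j\}$, the plan is to characterize $\E_\I$ exactly as
\[
  \E_\I=\c{I_j\setminus C(\I')\mid I_j\in\I,\ \I'\subseteq\I\setminus\c{I_j}}\setminus\c{\es}.
\]
Both directions are immediate. Any pair $(j,\I')$ is realized by an ordering $\sigma$ that lists $\I'$ first, then~$I_j$, then the rest. Conversely, every $E_{\sigma,j}$ has this form with $\I'$ the predecessors of~$I_j$.

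\textbf{Size.} From the characterization, $|\E_\I|\le n\cdot 2^{n-1}$. To reach the sharper $\O{2^n}$ bound, I will remove the factor~$n$ with a canonical-form argument: map each $E\in\E_\I$ injectively to a subset of~$\I$. Fix~$E$ and choose a pair $(I_j,\I')$ realizing it with $\I'$ \emph{maximal}, i.e.\ $\I'$ consists of all intervals $I\ne I_j$ with $I\cap E=\es$. Then consider the subset $\I'\cup\{I_j\}$.

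\textbf{Injectivity (main obstacle).} The key step is recovering~$E$ from this subset $S=\I'\cup\{I_j\}$. Since $S$ does not determine which member is~$I_j$, I have two ways to handle this. The first is to argue that $E$ is determined by~$S$ as $C(S)$ minus the union of the other members. The second, which I will try first, uses the $\es$-free structure: every interval of~$\I$ either lies in $\I'$ (disjoint from~$E$) or meets~$E$. Among the intervals meeting~$E$, a canonical choice of $I_j$ works, for instance the one with the smallest index that contains~$E$. Then $E$ is determined by the set $\I'$ of intervals disjoint from it, together with the rule $I_j=$ the minimal-index interval outside $\I'$ containing $I\setminus C(\I')$. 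The resulting map $E\mapsto\I'$ is injective, because $E=I_j\setminus C(\I')$ is computed from $\I'$ alone. This gives $|\E_\I|\le 2^n$.

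\textbf{Computation.} Enumerate all $2^n$ subsets $\I'\subseteq\I$. For each, compute $C(\I')$ in $\O{n}$ time by maintaining its components incrementally along a Gray-code or subset-lattice order, or directly in $\O{n\log n}$ after an initial sort. Then, for each of the $n$ intervals $I_j\notin\I'$, compute $I_j\setminus C(\I')$ in $\O{n}$ time by scanning the components of $C(\I')$ in sorted order. This costs $\O{n^2}$ per subset and $\O{n^2 2^n}$ overall, producing all candidate exposed parts with repetitions.

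\textbf{Deduplication.} Each candidate is stored as a sorted list of $\O{n}$ endpoints. I will insert these lists into a trie keyed by endpoint indices, which costs $\O{n}$ per insertion and does not exceed the bound. Alternatively, keep only the canonical representatives identified in the injectivity step, which avoids deduplication altogether.
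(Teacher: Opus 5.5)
Your enumeration is fine as far as it goes. It shows $\abs{\E_\I}\le n\cdot 2^{n-1}$ and generates all candidates in $\O{n^2\cdot 2^n}$ time. Deduplication at $\O{n}$ per candidate (radix-sorting the endpoint lists, or a trie with constant-time child lookup) stays within that bound.

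The gap is the $\O{2^n}$ size bound. Your decoding rule is circular: to pick $I_j$ among the intervals outside $\I'$, you already need to know $E$. The maps you propose are also genuinely not injective. For $\I=\{[0,2),[1,3)\}$, the exposed parts $[0,2)$ and $[1,3)$ each meet both intervals, so $E\mapsto\{I\in\I\mid I\cap E=\es\}$ sends both to $\es$. Under $E\mapsto\I'\cup\{I_j\}$, the exposed parts $[0,1)$ and $[2,3)$ both go to $\{[0,2),[1,3)\}$, so ``$C(S)$ minus the other members'' cannot recover $E$ either.

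Worse, $\abs{\E_\I}\le 2^n$ is false in general, so no injection into the subsets of $\I$ can exist. For $\I=\{[0,4),[1,5),[2,6)\}$, the nine sets $[0,4),[0,2),[0,1),[1,5),[1,2),[4,5),[2,6),[4,6),[5,6)$ are all exposed parts. Without a real counting argument you are left with $n\cdot 2^{n-1}$, and that factor $n$ matters: it would turn the bound of \cref{thm:general-time} into $\O{n^4\cdot 2^n}$.

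The missing idea is the paper's incremental count. Order the intervals by decreasing length and let $\I_k$ consist of the $k$ longest. The next interval $I_{k+1}=[a_{k+1},b_{k+1})$ has no proper subinterval in $\I_k$. So each earlier interval meeting it either contains it or covers a prefix or a suffix of it. Hence every non-empty exposed part of $I_{k+1}$ is a single interval. It starts at $a_{k+1}$ or at an end point from $\I_k$, and ends at $b_{k+1}$ or at a start point from $\I_k$, giving at most $(k+1)^2$ possibilities.

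Every exposed part $E\in\E_{\I_k}$ of an older interval becomes either $E$ or $E\setminus I_{k+1}$, depending on whether $I_{k+1}$ comes earlier. This gives $\abs{\E_{\I_{k+1}}}\le 2\abs{\E_{\I_k}}+(k+1)^2$, which solves to $\abs{\E_\I}\le 6\cdot 2^n$. Also drop your remark about keeping only canonical representatives to avoid deduplication, since it rests on the non-existent injection.
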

    \begin{proof}
      Assume that \( \I = \c{I_1, \ldots, I_n} \) are in descending order of their lengths.
      We build the set~\( \E_\I \) iteratively.
      Let \( \I_k \coloneqq \c{I_1, \ldots, I_k} \) for all \( k \in [n] \) and consider the set~\( \E_{\I_k} \).
      Since \( \abs{I_i} \geq \abs{I_{k+1}} \) for all \( i \in [k] \), there are no proper subintervals of~\( I_{k+1} \) in~\( \I_k \).
      Therefore, the exposed part of~\( I_{k+1} \) in any sequence of~$\I_{k+1}$ is either empty or an interval
      \begin{itemize}
        \item whose start point is either the start point of~$I_{k+1}$ or an end point of some interval in~$\I_k$
        \item and whose end point is either the endpoint of~$I_{k+1}$ or a start point of an interval in~$\I_k$.
      \end{itemize}
      Thus, there are at most \( \r{k+1}^2 \) non-empty exposed parts of~\( I_{k+1} \) in~\( \E_{\I_{k+1}} \).
      Moreover, for every \( E \in\E_{\I_k} \), there are the two (possibly identical or empty) exposed parts~\( E \) and \( E \setminus I_{k+1} \).
      That is, we obtain the recursive relation \( \abs{\E_{\I_{k+1}}} \leq 2 \cdot \abs{\E_{\I_k}} + \r{k+1}^2 \) for all \( k \in [n-1] \), where \( \abs{\E_{\I_1}} = 1 \).

      We now show inductively that \( \abs{\E_{\I_k}} \leq 6 \cdot 2^k - k^2 - 4k - 6 \) for all \( k \in \c{1, \ldots, n} \).
      For~$k=1$, this is clear.
      For $k\ge 2$, we have:
      \begin{align*}
        \label{eq:recurrence-relation}
        \abs{\E_{\I_k}} & \leq 2 \cdot \abs{\E_{\I_{k-1}}} + k^2                        \\
                        & \leq 2 \cdot \r{6 \cdot 2^{k-1} - (k-1)^2 - 4(k-1) - 6} + k^2 \\
                        & = 6 \cdot 2^k - 2k^2+4k-2 - 8k +8 - 12 + k^2                  \\
                        & = 6 \cdot 2^k - k^2 - 4k - 6\text{.}
      \end{align*}
      Hence, \( \abs{\E_\I}\in \O{2^n} \).

      We compute the set~$\E_\I$ in~$n$ steps as described above.
      In each step~\( k \), we compute the exposed part \( E \setminus I_{k+1} \) for \( E \in \E_{\I_k} \) in \( \O{k} \) time (since~$E$ has at most~$k$ interval components).
      Additionally, we compute the exposed parts of~\( I_{k+1} \), by checking in constant time for each of the \( \O{k^2} \) pairs of start and end points from~$\I_{k+1}$ whether they are an exposed part, that is, a subinterval of~\( I_{k+1} \).
      To prevent duplicates, we can keep the exposed parts sorted in a balanced binary search tree, which allows to detect duplicates and to insert exposed parts in \( \O{n\cdot\log\abs{\E_\I}} \) time, that is, \( \O{n^2} \) time (the factor~$O(n)$ accounts for the time to compare two exposed parts).
      Overall, we compute at most \( 2^n\) exposed parts and each exposed part takes at most \( \O{n^2} \) time.
    \end{proof}

    With \cref{lem:compute-exposed}, we can state the time complexity of \cref{alg:compute-opt}.

    \begin{theorem}
      \label{thm:general-time}
      \cref{alg:compute-opt} solves \fIO{} in \( \O{n^3 \cdot 2^n} \) time.
    \end{theorem}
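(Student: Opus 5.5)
Correctness is immediate from \cref{lem:DP-correct}, together with the reduction to the case that $C(\I)$ is an interval. We solve each interval component of $C(\I)$ separately (each one is a covered interval, so its optimum is also stored in $\OPT{}$), add up the optimal costs, and compare the sum with~$W$. Computing the interval components of $C(\I)$ takes $\O{n\log n}$ time after sorting. So what remains is the running-time bound, and the plan is to add up the costs of the preprocessing steps and of the main loop of \cref{alg:compute-opt}.

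\emph{Preprocessing.} By \cref{lem:preproc}, $\C^*_\I$ in topological order and all sizes $|\I_C|$ can be computed in $\O{n^3}$ time. By \cref{lem:compute-exposed}, $\E_\I$ has size $\O{2^n}$ and can be computed in $\O{n^2\cdot 2^n}$ time. We also store $\C^*_\I$ in a lookup structure so that membership of an interval $[a,b)$, together with its entry $\OPT{[a,b)}$ and the size $|\I_{[a,b)}|$, can be found quickly. An $n\times n$ array indexed by the ranks of start and end points gives $\O{1}$ access after $\O{\log n}$ binary search, or even $\O{1}$ if we keep endpoint ranks with each interval component.

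\emph{Main loop.} The outer loop has $|\C^*_\I|\in\O{n^2}$ iterations, and the inner loop over $E\in\E_\I$ has $\O{2^n}$ iterations. This gives $\O{n^2\cdot 2^n}$ pairs $(C,E)$, so it suffices to handle each pair in $\O{n}$ time (up to a logarithmic factor that we try to avoid).
\begin{itemize}
  \item Line~\ref{line:checkE}: check whether some $I\in\I$ satisfies $E\subseteq I\subseteq C$. Since $E$ is a union of at most $n$ components, $E\subseteq I$ amounts to testing the leftmost and rightmost endpoints of~$E$ against~$I$. Scanning all $n$ intervals therefore costs $\O{n}$.
  \item Line~\ref{line:c-e}: compute $C\setminus E$ and its components. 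Since $E$ is stored as a sorted list of at most $n$ components, this takes $\O{n}$ time.
  \item Lines~\ref{line:checkC'} and~\ref{line:lookup}: there are at most $n+1$ components $C'_l$. Each membership test, lookup of $\OPT{C'_l}$ and lookup of $|\I_{C'_l}|$ costs $\O{1}$ with the array indexed by endpoint ranks.
\end{itemize}

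\emph{Main obstacle.} The delicate step is keeping the per-pair cost at $\O{n}$ without an extra $\log$ factor from the lookups. We avoid this by storing each exposed part with the ranks of its endpoints among the sorted start and end points of~$\I$, computed once during preprocessing. With these ranks, every component of $C\setminus E$ maps to its array cell in constant time. Summing up, the total running time is
\[
\O{n^3} + \O{n^2 2^n} + \O{n^2}\cdot\O{2^n}\cdot\O{n} = \O{n^3\cdot 2^n}.
\]
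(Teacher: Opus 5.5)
Your proposal is correct and follows the paper's proof: $\O{n^3}$ preprocessing by \cref{lem:preproc}, $\O{n^2\cdot 2^n}$ for computing $\E_\I$ by \cref{lem:compute-exposed}, and then $\O{n^2}$ table entries times $\O{2^n}$ exposed parts times $\O{n}$ work per pair $(C,E)$. That per-pair work covers the check in Line~\ref{line:checkE}, computing $C\setminus E$, and iterating over its interval components. Your endpoint-rank array for constant-time lookups of $\OPT{C'_l}$ and $|\I_{C'_l}|$, and your handling of a disconnected $C(\I)$, only make explicit details that the paper leaves implicit.
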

    \begin{proof}
      The preprocessing to compute the covered intervals~$\C^*_\I$ (and the corresponding sizes~$\abs{\I_C}$) can be done in~$\O{n^3}$ time (\Cref{lem:preproc}) and the exposed parts~$\E_\I$ can be precomputed in~$\O{n^2\cdot 2^n}$ time (\cref{lem:compute-exposed}).
      The table~$\OPT{}$ has~$\O{n^2}$ entries and each entry can be computed in~$\O{n\cdot 2^n}$ time since~$|\E_\I|\in \O{2^n}$ and we need~$\O{n}$ time for the check in Line~\ref{line:checkE}, computing~$C\setminus E$ and~$|E|$, and iterating over the interval components of~$C\setminus E$.
    \end{proof}

    We have seen that the exponential running time of \cref{alg:compute-opt} is due to the number of possible exposed parts.
    In the next section, we show that for certain cost functions we do not need to consider all possible exposed parts but only a polynomial number.
    In \cref{sec:hardness}, however, we show that for arbitrary~$f$ (with only oracle access) it cannot be avoided to enumerate all possible exposed parts.

  \section{Polynomial-Time Solvable Cases}
    \label{sec:polynomial}

    In this section, we use the approach described in \cref{sec:arbitrary-cost} to obtain polynomial-time algorithms for special cost functions answering the open question by~\citetdurr{} regarding the function $f(x)=2^x$.
    In particular, we give an $\O{n^5}$-time algorithm for every cost function~$f$ where~$f-f(0)$ is superadditive.

    \subsection{Cost Functions where~$f-f(0)$ is Subadditive}
      \label{ssec:subadditive}

      First, we have a look at cost functions~$f$ where~$f-f(0)$ is subadditive, that is, \( \f{a} + \f{b} \geq \f{a + b} + f(0)\) holds for all \( a, b \in \nonneg \) (for example, every concave function).
      For such functions, we show the following.

      \begin{lemma}
        \label{lem:subadd-property}
        Let~$f$ be a cost function such that~$f - \f{0}$ is subadditive.
        Then there is an optimal sequence where all exposed parts are intervals.
      \end{lemma}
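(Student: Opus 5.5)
We will prove \cref{lem:subadd-property} by an exchange argument that, starting from an optimal sequence, repeatedly reduces the number of exposed parts with several interval components without increasing the cost. Write $g \coloneqq f - \f{0}$, so $g$ is subadditive, $g(0)=0$, and the cost of a sequence is $n\cdot\f{0}+\sum_j g(|E_{\sigma,j}|)$. It therefore suffices to work with $g$. As potential we take the total number of interval components over all exposed parts, summed over the sequence. Since the operations below only split exposed parts or move intervals, this potential can be shown to strictly decrease, and the process terminates.

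\textbf{Exchange step.} Let $I_j$ be an interval whose exposed part $E_{\sigma,j}$ has at least two components, and let $E_1$ be its leftmost component. Between $E_1$ and the next component lies a gap $G$ that is covered by intervals coming before $I_j$. We move $I_j$ earlier in the sequence. The natural choice is to move $I_j$ directly before the first interval $I_i$ (in sequence order) that covers some point of a gap of $E_{\sigma,j}$ inside $I_j$. After the move, $I_j$ exposes its old exposed part $E_{\sigma,j}$ plus the part $D$ newly uncovered in the gaps. The intervals in between lose exactly the portions $D$ from their exposed parts, and $D$ splits among them as $D=\bigcup_i D_i$ with $D_i\subseteq E_{\sigma,i}$.

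The cost change is
\[
g\r{|E_{\sigma,j}|+|D|}-g\r{|E_{\sigma,j}|}+\sum_i\s{g\r{|E_{\sigma,i}|-|D_i|}-g\r{|E_{\sigma,i}|}}\text{,}
\]
and subadditivity bounds only the first difference by $g(|D|)$. This alone does not yield a nonpositive change. The move must therefore be chosen more carefully, so that a single donor interval gives up exactly the merged region.

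\textbf{Refined exchange: splitting instead of merging.} The cleaner alternative is the reverse operation. Take the latest interval $I_j$ with a disconnected exposed part $E=E_1\cup E'$, where $E_1$ is one component. Some interval $I_i$ coming after $I_j$ is tempting as a recipient, but no recipient actually needs to exist. Instead we use subadditivity in the correct direction. We look for an ordering in which the covering of $E$ is done by one interval, with the rest shifted.

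I expect this to be the main obstacle: subadditivity says a merged exposed part is cheaper than two separate ones. So the exchange should make some interval \emph{absorb} a component of $E$ into an already nonempty exposed part, or to take all of a gap region, rather than splitting.

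Concretely, I would first try the following. Let $I_i$ be the earliest interval before $I_j$ that covers part of the gap $G$. Move $I_j$ directly before $I_i$; then $I_j$'s exposed part becomes an interval, or at least has fewer components. Then compare with moving $I_i$ directly after $I_j$. One of these two moves concentrates the contested length $x$ into a single interval's exposed part: it turns a pair of costs $g(a)+g(b)$ into $g(a+x)+g(b-x)$ with $b-x=0$ or $a=0$. By subadditivity, $g(a+b)\le g(a)+g(b)$, so the move is nonincreasing. Verifying that $b-x=0$ in the chosen case, namely that the donor interval loses its entire exposed part, is the crux, and I would handle it by choosing $I_i$ as the interval whose exposed part lies entirely within the hull of $E$.
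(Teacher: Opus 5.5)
Your proposal does not yet contain a proof. You correctly discard the first exchange. The ``splitting'' paragraph never specifies an operation. In the final sketch you explicitly leave open the crux: that the length given up by the donor goes to $I_j$ and to nobody else. The strict decrease of your potential is likewise only asserted.

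The concrete moves you propose also fail. Take $f(x)=\sqrt{x}$, $I_j=[-10,42)$ and the sequence $([1,31),[11,21),[-10^4,0),[32,10^4),I_j)$. Then $E_{\sigma,j}=[0,1)\cup[31,32)$. The earliest interval meeting the gap is $[1,31)$, and its exposed part lies in the hull $[0,32)$.
\begin{itemize}
\item Moving $[1,31)$ directly after $I_j$ changes the cost by $\sqrt{10}+\sqrt{22}-\sqrt{30}-\sqrt{2}\approx 0.96>0$. Your hull condition does make the donor's exposed part vanish, but that is not enough. The interval $[11,21)$, which $[1,31)$ used to shadow, becomes exposed, so the $30$ units are split between two recipients and subadditivity says nothing.
\item Moving $I_j$ directly before $[1,31)$ raises the cost by about $0.22$. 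The two long intervals lose only part of their exposed length, which is exactly the problem you found with your first exchange.
\end{itemize}
So the claimed dichotomy ``one of the two moves is nonincreasing'' is false.

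The missing idea is to move a whole group at once. Let $G$ be a gap between two consecutive components of $E_{\sigma,j}$. Its endpoints are not covered by predecessors of $I_j$, so $G$ is an entire interval component of the covered area of those predecessors. Hence every predecessor meeting $G$ is contained in $G\subseteq I_j$, and the exposed parts of these predecessors $I_{i_1},\ldots,I_{i_k}$ partition $G$.

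Move all of them, keeping their relative order, to directly after $I_j$. Then:
\begin{itemize}
\item their exposed parts become empty, since they lie in $I_j$;
\item $I_j$'s exposed part becomes $E_{\sigma,j}\cup G$, with one component fewer;
\item predecessors not meeting $G$ are unaffected, being disjoint from the moved intervals;
\item successors of $I_j$ keep the same predecessor set.
\end{itemize}
With $g=f-f(0)$, the cost changes by $g(|E_{\sigma,j}|+\sum_l|E_{\sigma,i_l}|)-g(|E_{\sigma,j}|)-\sum_l g(|E_{\sigma,i_l}|)\le 0$ by iterated subadditivity. No other exposed part gains a component, so your potential strictly decreases and the process terminates. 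This is the paper's proof.

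Moving only the \emph{last} predecessor meeting $G$ would also make the cost accounting exact. But then $I_j$ can gain an isolated new component, and the total component count need not decrease.
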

      \begin{proof}
        Assume that $(I_1,\ldots,I_n)$ is an optimal sequence where \( I_j\) has an exposed part~$E_j$ that is not an interval, that is, it has at least two interval components.
        This means that the covered area of \( \c{ I_1, \ldots ,I_{j-1}} \) has an interval component that is a subinterval of~\( I_j \).
        Therefore, the intervals before~\( I_j \) that cover this interval component are also subintervals of~\( I_j \)\ifcsname r@fig:subadditive\endcsname{} (see \cref{fig:subadditive})\fi.
        Let those intervals be \( I_{i_1}, \ldots, I_{i_k} \) and let their exposed parts be \( E_{i_1}, \ldots, E_{i_k} \).
        Now consider what happens if we change \( I_{i_1}, \ldots, I_{i_k} \) to come after~\( I_j \).
        Then the exposed parts of \( I_{i_1}, \ldots, I_{i_k} \) will be empty and the exposed part of~\( I_j \) will increase by~$\abs{E_{i_1}}+\cdots+\abs{E_{i_k}}$.
        Subadditivity now yields
        \[
          \f{\abs{E_{i_1}}} + \cdots + \f{\abs{E_{i_k}}} + \f{\abs{E_j}} \geq \f{\abs{E_{i_1}} + \cdots + \abs{E_{i_k}} + \abs{E_j}} + k \cdot \f{0}\text{,}
        \]
        that is, the cost does not increase.
        This rearrangement reduces the number of interval components of the exposed part of~\( I_j \) while not increasing the number of interval components of any other exposed part.
        Therefore, after a finite number of rearrangements we obtain an optimal sequence where all exposed parts are intervals.
      \end{proof}

      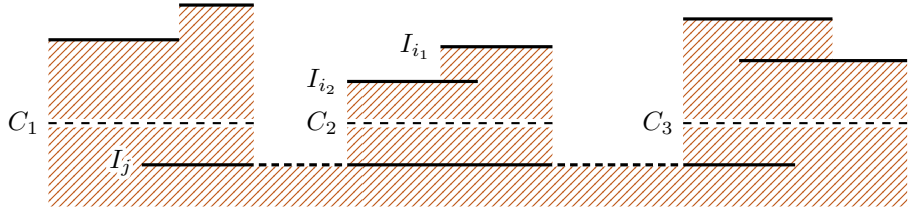
\begin{figure}[btp]
        \centering
        \pgfmathsetmacro{\figscale}{\textwidth/57cm}
          \begin{tikzpicture}[scale=\figscale]
            \draw[color=white] (0, 0) -- (52, 0);

            \coordinate (a1)  at (10, 10.875);
            \coordinate (a13) at (10,  9    );
            \coordinate (b1)  at (14, 10.875);
            \coordinate (b17) at (14,  2.25 );
            \coordinate (a2)  at (37, 10.125);
            \coordinate (a27) at (37,  2.25 );
            \coordinate (b2)  at (45, 10.125);
            \coordinate (b24) at (45,  7.875);
            \coordinate (a3)  at ( 3,  9    );
            \coordinate (a30) at ( 3,  0    );
            \coordinate (b3)  at (10,  9    );
            \coordinate (a4)  at (40,  7.875);
            \coordinate (b4)  at (49,  7.875);
            \coordinate (b40) at (49,  0    );
            \coordinate (a5)  at (19,  6.75 );
            \coordinate (a57) at (19,  2.25 );
            \coordinate (b5)  at (26,  6.75 );
            \coordinate (a6)  at (24,  8.625);
            \coordinate (a65) at (24,  6.75 );
            \coordinate (b6)  at (30,  8.625);
            \coordinate (b67) at (30,  2.25 );
            \coordinate (a7)  at ( 8,  2.25 );
            \coordinate (b7)  at (43,  2.25 );
            \coordinate (c1a) at ( 3,  4.5  );
            \coordinate (c1b) at (14,  4.5  );
            \coordinate (c2a) at (19,  4.5  );
            \coordinate (c2b) at (30,  4.5  );
            \coordinate (c3a) at (37,  4.5  );
            \coordinate (c3b) at (49,  4.5  );

            \fill[
                pattern=north east lines,
                pattern color=colorSecondary
            ]  (a30) -- (a3)%
                     -- (a13) -- (a1)%
                              -- (b1)%
                 -- (b17)%
                 -- (a57) -- (a5)%
                          -- (a65) -- (a6)%
                                   -- (b6)%
                 -- (b67)%
                 -- (a27)     -- (a2)%
                              -- (b2)%
                    -- (b24)%
                    -- (b4)%
            -- (b40);

            \draw[line width=1.2] (a1)  --  (b1);
            \draw[line width=1.2] (a2)  --  (b2);
            \draw[line width=1.2] (a3)  --  (b3);
            \draw[line width=1.2] (a4)  --  (b4);
            \draw[line width=1.2] (a5)  --  (b5);
            \draw[line width=1.2] (a6)  --  (b6);
            \draw[line width=1.2] (a7)  -- (b17);
            \draw[line width=1.2] (a57) -- (b67);
            \draw[line width=1.2] (a27) --  (b7);

            \dashedline{1.2}{b17}{a57}{2.7}{1.8}{1}{1}
            \dashedline{1.2}{b67}{a27}{2.7}{1.8}{1}{1}

            \draw[line width=3.2, color=white] (c1a) -- (c1b);
            \draw[line width=3.2, color=white] (c2a) -- (c2b);
            \draw[line width=3.2, color=white] (c3a) -- (c3b);

            \dashedline{0.8}{c1a}{c1b}{3}{3}{0}{0}
            \dashedline{0.8}{c2a}{c2b}{3}{3}{0}{0}
            \dashedline{0.8}{c3a}{c3b}{3}{3}{0}{0}

            \contourlength{1.5pt}
            \contournumber{64}
            \coordinate[label=left:\( I_{i_2} \)]                (l5)  at  (a5);
            \coordinate[label=left:\( I_{i_1} \)]                (l6)  at  (a6);
            \coordinate[label=left:{\contour{white}{\( I_j \)}}] (l7)  at  (a7);
            \coordinate[label=left:\( C_1 \)]                    (lc1) at (c1a);
            \coordinate[label=left:\( C_2 \)]                    (lc2) at (c2a);
            \coordinate[label=left:\( C_3 \)]                    (lc3) at (c3a);

          \end{tikzpicture}

        \caption{%
          A sequence of intervals (higher is earlier in the sequence) where the interval~\( I_j \) has an exposed part that is not an interval (dashed part of~\( I_j \)).
          The covered area \( C_1 \cup C_2 \cup C_3 \) of the other intervals has the interval component~\( C_2 \) that splits the exposed part of~\( I_j \) into multiple interval components.
          The intervals \( I_{i_1}\) and \( I_{i_2} \) cover the interval~\( C_2 \).
          Moving them below~\( I_j \) results in their exposed parts being empty and \( I_j \) having one fewer interval component in its exposed part.
        }

        \label{fig:subadditive}
      \end{figure}

      \cref{lem:subadd-property} lets us find a polynomial bound for the number of exposed parts that we have to consider for a solution.

      \begin{lemma}
        \label{lem:interval-exposed-subset}
        Let \( \E_\I'\subseteq \E_\I \) be the set of all non-empty exposed parts of~$\I$ that are intervals.
        Then~\( \abs{\E_\I'} \in \O{n^2} \) and we can compute~\( \E_\I' \) in \( \O{n^3} \) time.
      \end{lemma}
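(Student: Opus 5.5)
The plan is to characterize exactly when an interval $E=\interval{s,t}$ is an exposed part. Once that is done, it suffices to enumerate $\O{n^2}$ candidate endpoint pairs and test each candidate in $\O{n}$ time.

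\emph{Step 1 (candidates).} Suppose $E=E_{\sigma,j}=\interval{s,t}$ is a non-empty exposed part of $I_j$ that is an interval.
\begin{itemize}
  \item If $s\neq a_j$, then the point just left of $s$ lies in $I_j$ but is covered by an earlier interval. Hence $s$ is the end point $b_i$ of some interval $I_i$.
  \item Symmetrically, $t$ is either $b_j$ or a start point $a_i$.
\end{itemize}
So $s$ ranges over the at most $2n$ start and end points, and so does $t$. This gives at most $(2n)^2$ candidates and already proves $\abs{\E_\I'}\in\O{n^2}$.

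\emph{Step 2 (characterization).} I would show that $E=\interval{s,t}$ is an exposed part of some interval in $\I$ if and only if the following holds. Let $\mathcal A_E$ be the set of intervals of $\I$ disjoint from $E$, and let $U\coloneqq C(\mathcal A_E)$. Then there must be some $I_j\supseteq E$ with $I_j\setminus E\subseteq U$.
\begin{itemize}
  \item Sufficiency: place all intervals of $\mathcal A_E$ first, then $I_j$, then the rest. The exposed part of $I_j$ is exactly $I_j\setminus U=E$, because $U$ covers $I_j\setminus E$ and is disjoint from $E$.
  \item Necessity: every interval before $I_j$ in $\sigma$ must avoid $E$, since $E$ is exposed. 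These intervals still cover $I_j\setminus E$, so $I_j\setminus E\subseteq U$.
\end{itemize}

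\emph{Step 3 (testing in $\O{n}$ per candidate).} I would sort all endpoints once in $\O{n\log n}$ time.
\begin{itemize}
  \item For a candidate $\interval{s,t}$, compute $\mathcal A_E$ and the interval components of $U$ by a single sweep, in $\O{n}$ time.
  \item Let $L\le s$ be the left end of the component of $U$ ending exactly at $s$, or $L=s$ if no such component exists. Define $R\ge t$ symmetrically as the right end of the component of $U$ starting at $t$.
  \item The condition $I_j\setminus E\subseteq U$ with $I_j\supseteq E$ is then equivalent to $L\le a_j\le s$ and $t\le b_j\le R$. This can be checked over all $j$ in $\O{n}$ time.
\end{itemize}
Summed over the $\O{n^2}$ candidates, this gives $\O{n^3}$ time in total. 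Duplicates do not arise, because each candidate pair $(s,t)$ is tested once.

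The main obstacle is Step 2, specifically making sure the necessity direction is airtight. The key point is that any interval preceding $I_j$ must be disjoint from $E$, not merely cover part of $I_j$. The careful handling of half-open endpoints in defining $L$ and $R$ is routine but needs attention.
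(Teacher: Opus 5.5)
Your proposal is correct and follows essentially the same route as the paper. Both enumerate the $\O{n^2}$ endpoint pairs and, for each candidate $\interval{s,t}$, test in $\O{n}$ time whether some $I_j\supseteq E$ has its left and right overhangs inside the maximal covered intervals ending at $s$ and starting at $t$; your $\interval{L,s}$ and $\interval{t,R}$ coincide with the paper's earliest-starting covered interval ending at $a$ and its right-hand analogue. The differences are minor: you compute these intervals by a direct sweep over the intervals disjoint from $E$ rather than from the covered intervals precomputed in \cref{lem:preproc}, and you prove both directions of the characterization explicitly, whereas the paper leaves sufficiency implicit.
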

      \begin{proof}
        Any non-empty exposed part in~$\E_\I'$ must start and end at some start or end points of some intervals in~\( \I \).
        Hence,~$\E_\I'$ contains at most \( \O{n^2} \) non-empty exposed parts.

        Consider such a candidate interval~\( E =[a,b)\).
        To check whether~$E$ is an exposed part in some sequence, we keep track of the intervals~\( I \in \I \) which contain~$E$ as a subinterval.
        Clearly, any such candidate~\( I=[a',b')\) must satisfy $a'\le a$ and $b'\ge b$.
        Additionally, it has to satisfy~$a'=a$ or there has to exist a covered interval that covers~\([a',a)\) and ends at~$a$.
        To check the latter condition, we determine the covered interval~$C$ that starts the earliest and ends at~$a$.
        This can be done in~$\O{n}$ time assuming the covered intervals have been precomputed (in $\O{n^3}$ time by \Cref{lem:preproc}).
        Now, any interval~\( I \in \I \) that starts before~$C$ cannot have~\( E \) as its exposed part in any sequence.
        Hence, in~$\O{n}$ time we can discard all intervals which cannot have an exposed part starting at~$a$.
        For the remaining candidates, we analogously check the conditions for~$b'$ and then discard all remaining intervals which cannot have an exposed part ending at~$b$.
        That is, we can check if~\( E \) is an exposed part in some sequence in \( \O{n} \) time.
      \end{proof}

      This lets us formulate our first result.

      \begin{theorem}
        \label{thm:interval-exposed}
        Any instance of \fIO{} that has an optimal sequence where all exposed parts are intervals can be solved in \( \O{n^5} \) time.
      \end{theorem}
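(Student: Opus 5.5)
The plan is to run \cref{alg:compute-opt} unchanged, except that the loop in Line~\ref{line:min-loop} ranges over the set~$\E_\I'$ of interval-shaped exposed parts from \cref{lem:interval-exposed-subset} instead of over all of~$\E_\I$. Two things must then be checked: that the restricted recursion still yields the optimum, and that the running time is $\O{n^5}$.

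For correctness, I fix an optimal sequence~$\sigma$ of~$\I$ in which every exposed part is an interval; such a sequence exists by assumption. I then repeat the argument of \cref{lem:recursion} along~$\sigma$. Let $I_k$ be the last interval with non-empty exposed part, so $E_{\sigma,k}\in\E_\I'$, and move completely covered intervals forward as in that proof. This operation does not change any non-empty exposed part, since the intervals moved have exposed part empty both before and after the move. The prefix $(I_1,\ldots,I_{k-1})$ is therefore an optimal sequence of $\I_{C\setminus E_{\sigma,k}}$. It splits into optimal sequences for the subinstances $\I_{C'_l}$ of the interval components of $C\setminus E_{\sigma,k}$, and all exposed parts in these subsequences are still intervals.

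The delicate point, and the main obstacle, is that the hypothesis concerns the whole instance, whereas the DP recurses on subinstances $\I_{C'}$. For this reason I will not claim that every subinstance has an interval-only optimum. Instead I will argue by induction over the topological order that the restricted DP value $\OPT{C'}$ is at most the cost of any sequence of $\I_{C'}$ whose exposed parts are all intervals. Each such exposed part lies in~$\E_\I'$, because an exposed part in the subinstance is also an exposed part in the full instance: place the subinstance first. Every restricted DP value is also the cost of some actual sequence, by the ``$\le$'' direction of \cref{lem:recursion}. Combining the two, the value at $C(\I)$ equals the cost of~$\sigma$, which is $\opt(\I)$. If $C(\I)$ has several components, the same argument applies to each component.

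For the running time, \cref{lem:preproc} and \cref{lem:interval-exposed-subset} give the preprocessing in $\O{n^3}$. The table has $\O{n^2}$ entries, each iterates over $\O{n^2}$ candidate parts $E\in\E_\I'$, and each candidate costs $\O{n}$: the check in Line~\ref{line:checkE}, forming $C\setminus E$ (which has at most two components since $E$ is an interval), and the lookups. This gives $\O{n^5}$ in total.
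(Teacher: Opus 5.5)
Your proposal is correct and follows the paper's proof: run \cref{alg:compute-opt} with the $\O{n^2}$-size set $\E_\I'$ from \cref{lem:interval-exposed-subset} in place of $\E_\I$, which takes $\O{n^3\cdot\abs{\E_\I'}}=\O{n^5}$ time. The paper asserts correctness of this restriction without argument. Your induction supplies that justification: restricted table values are realizable by actual sequences, and each is at most the cost of every interval-only sequence of its subinstance, so you never need every subinstance to have an interval-only optimum.
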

      \begin{proof}
        By \cref{lem:interval-exposed-subset}, we can compute the set~\( \E_\I' \) of non-empty exposed parts that are intervals in \( \O{n^3} \) time and its size is bounded in \( \O{n^2} \).
        We simply use the set~$\E_\I'$ instead of~$\E_\I$ in \cref{alg:compute-opt} to find an optimal sequence in \( \O{n^3 \cdot \abs{\E_\I'}} =\O{n^5}\) time.
      \end{proof}

      \cref{thm:interval-exposed} together with \cref{lem:subadd-property} yield the following corollary.
      \begin{corollary}
        \label{cor:subadditive}
        \fIO{} can be solved in \( \O{n^5} \) time for every function~$f$ where~$f-f(0)$ is subadditive.
      \end{corollary}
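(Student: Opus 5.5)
The corollary follows by chaining two earlier results, so the plan is short. Let~$f$ be any cost function such that~$f-f(0)$ is subadditive, and let~$\I$ be an instance of \fIO{} with~$n$ intervals. I will first invoke \cref{lem:subadd-property}. It guarantees that~$\I$ admits an optimal sequence in which every exposed part is an interval (or empty). Hence every instance of \fIO{} with such an~$f$ falls under the hypothesis of \cref{thm:interval-exposed}, which then gives the claimed running time of~$\O{n^5}$.

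The one point I will check explicitly is that the restricted algorithm really returns the optimum. When \cref{alg:compute-opt} is run with~$\E_\I'$ in place of~$\E_\I$, it computes the minimum of the recursion of \cref{lem:recursion} over a subset of the candidate exposed parts. Every term of that minimum corresponds to an actual ordering, so the value produced is never below the optimum.

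For the reverse direction, I take the optimal sequence from \cref{lem:subadd-property} and apply the decomposition in the proof of \cref{lem:recursion}. The exposed part~$E$ of the last not fully covered interval is an interval, so~$E\in\E_\I'$. Moreover, the prefix of this sequence is an optimal sequence for~$\I_{C\setminus E}$, and it still has only interval exposed parts. By induction over the covered intervals in topological order, each table entry~$\OPT{C}$ is therefore attained.

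The only subtle step is the restriction to the subinstances~$\I_{C'_l}$. Lemma~\ref{lem:subadd-property} holds for every instance, so I can also apply it directly to each subinstance. This makes the induction go through, and the bound~$\O{n^3\cdot\abs{\E_\I'}}=\O{n^5}$ follows from \cref{lem:interval-exposed-subset}.
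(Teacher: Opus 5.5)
Your proposal is correct and takes the paper's route: the paper obtains the corollary simply by combining \cref{lem:subadd-property} with \cref{thm:interval-exposed}. Your extra check is also sound. You argue that \cref{alg:compute-opt} run on~$\E_\I'$ never undershoots the optimum, and that it attains the optimum because \cref{lem:subadd-property} can be applied to each subinstance~$\I_C$; this spells out a detail the paper leaves implicit in its proof of \cref{thm:interval-exposed}.
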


      Note that polynomial time is unlikely if we only require~$f$ to be subadditive, as we show in \Cref{ssec:nphard} that this case is \NPhard{}.
      In the next subsection, we use a similar approach for the case when~$f-f(0)$ is superadditive.

    \subsection{Cost Functions where~$f-f(0)$ is Superadditive}
      \label{ssec:superadditive}

      If~$f-f(0)$ is superadditive, then \( \f{a + b} + f(0) \geq \f{a} + \f{b}\) holds for all \( a, b \in \nonneg \) (for example, every convex function).
      This case is of special interest because such cost functions are somewhat natural for applications (like the function~$2^x$ in the original motivation).
      Again, we observe a special structure of optimal sequences for this case.

      \begin{lemma}
        \label{lem:sup-add-property}
        Let~$f$ be a cost function such that~\(f - \f{0} \) is superadditive.
        Then there is an optimal sequence where no interval comes before any of its proper subintervals.
      \end{lemma}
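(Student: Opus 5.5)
Starting from an optimal sequence, I would repeatedly repair violations by an exchange argument that never increases the cost. A violation is a pair where some interval $I_j$ comes before one of its proper subintervals $I_i \subsetneq I_j$.

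Among all optimal sequences I would pick one minimizing the number of inversions, meaning pairs $(I_j,I_i)$ with $I_i\subsetneq I_j$ and $I_j$ before $I_i$. Suppose such a violation exists. Choose $I_i$ to be the last-positioned interval that is a proper subinterval of some earlier interval, and let $I_j$ be any earlier superinterval of it. The cleanest move is the one mirroring \cref{lem:subadd-property}: take $I_i$ out and reinsert it directly before $I_j$.

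Then I would compare the costs.

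\begin{itemize}
  \item Before the move, $I_i$ is completely covered because $I_i\subseteq I_j$ and $I_j$ precedes it. So $I_i$ contributes $f(0)$.
  \item After the move, $I_i$ has exposed part $E'=I_i\setminus C(\text{intervals before }I_j)$. Since $I_i\subseteq I_j$, we have $E'\subseteq E_j$.
  \item $I_j$'s exposed part shrinks to $E_j\setminus E'$, of length $|E_j|-|E'|$.
  \item Every interval $I_l$ strictly between the old positions of $I_j$ and $I_i$ keeps its exposed part. Such an $I_l$ already came after $I_j\supseteq I_i$, so $I_i$'s new earlier position adds nothing new to its covered area.
  \item Intervals after the old position of $I_i$ are unaffected.
\end{itemize}

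The cost changes from $f(|E_j|)+f(0)$ to $f(|E_j|-|E'|)+f(|E'|)$. Superadditivity of $f-f(0)$ gives $f(|E_j|)+f(0)\ge f(|E_j|-|E'|)+f(|E'|)$, so the cost does not increase.

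Next I need the number of inversions to strictly decrease. Moving $I_i$ earlier past the block from $I_j$ to its old position removes the inversion $(I_j,I_i)$. It also removes inversions with every superinterval of $I_i$ in that block. It could create new inversions only with intervals in the block that are proper subintervals of $I_i$: these would now come after $I_i$ with $I_i\supsetneq$ them.

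To avoid this, I would choose the pair more carefully. Take $I_j$ to be the latest superinterval of $I_i$ that precedes it. Among the candidates, take $I_i$ minimal with respect to inclusion, or otherwise process the pair $(j,i)$ with $i$ positioned earliest.

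The main obstacle is exactly this new-inversion issue. If some $I_l\subsetneq I_i$ lies between $I_j$ and $I_i$, then $I_l\subsetneq I_j$ as well, so $(I_j,I_l)$ is itself a violation with a smaller interval. So I would first try choosing $I_i$ to be an inclusion-minimal interval among all intervals involved in a violation. Then no proper subinterval of $I_i$ can sit after $I_j$. Any such subinterval lies in $I_j$ and after $I_j$, which would make it violated too, contradicting minimality.

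With this choice the move creates no new inversions, and the inversion count strictly decreases. This contradicts the choice of the sequence, which completes the proof. Equal intervals need no care, since they are not proper subintervals of each other.
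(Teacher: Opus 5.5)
Your proof is correct and uses the same exchange as the paper: move the proper subinterval $I_i$ to directly before $I_j$. Only $I_i$ and $I_j$ then change their exposed parts, to $E_j\cap I_i$ and $E_j\setminus I_i$, and $f(|E_j|)+f(0)\ge f(|E_j|-|E'|)+f(|E'|)$ shows the cost does not increase; your one addition is the termination argument (an inversion-minimal optimal sequence together with an inclusion-minimal $I_i$, so that the move creates no new violations), which makes rigorous what the paper only asserts with ``we can repeat this step finitely many times.''
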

      \begin{proof}
        Assume that there is an optimal sequence where an interval~\( I_i \) comes after the interval~\( I_j \), but~\( I_i \) is a proper subset of~\( I_j \).
        Let~\( E_i=\es \) and~\( E_j \) be the exposed parts of~\( I_i \) and~\( I_j \) respectively.
        Consider what happens when we change the interval~\( I_i \) to be right before the interval~\( I_j \).
        Then, only the exposed parts of~\( I_i \) and~\( I_j \) can change.
        The exposed part of~\( I_i \) will be \( E_j \cap I_i \) and the exposed part of~\( I_j \) will be \( E_j \setminus I_i \) (see \cref{fig:superadditive}).
        By superadditivity, it holds:
        \begin{equation*}
          \label{eq:superadditive}
          \f{\abs{E_i}} + \f{\abs{E_j}} = \f{0} + \f{\abs{E_j \cap I_i} + \abs{E_j \setminus I_i}} \geq \f{\abs{E_j \cap I_i}} + \f{\abs{E_j \setminus I_i}}\text{.}
        \end{equation*}
        Therefore, the cost does not increase.
        We can repeat this step finitely many times until there are no more intervals before any of their proper subintervals.
        As we started with an optimal sequence and we did not increase the cost, this will be an optimal sequence as well.
      \end{proof}

      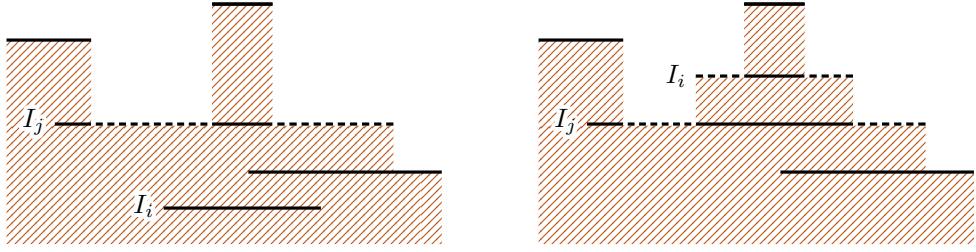
\begin{figure}[t]
        \centering
        \pgfmathsetmacro{\figscale}{\textwidth/22cm}
          \begin{tikzpicture}[scale=\figscale]

            \coordinate (a0)  at ( 0,    6   );
            \coordinate (a00) at ( 0,    0   );
            \coordinate (b0)  at (11,    6   );
            \coordinate (b00) at (11,    0   );
            \coordinate (a1)  at ( 5.25, 5   );
            \coordinate (a13) at ( 5.25, 2.5 );
            \coordinate (b1)  at ( 6.5,  5   );
            \coordinate (b13) at ( 6.5,  2.5 );
            \coordinate (a2)  at ( 1,    4.25);
            \coordinate (a20) at ( 1,    0   );
            \coordinate (b2)  at ( 2.75, 4.25);
            \coordinate (b23) at ( 2.75, 2.5 );
            \coordinate (a3)  at ( 2,    2.5 );
            \coordinate (b3)  at ( 9,    2.5 );
            \coordinate (b34) at ( 9,    1.5 );
            \coordinate (a4)  at ( 6,    1.5 );
            \coordinate (b4)  at (10,    1.5 );
            \coordinate (b40) at (10,    0   );
            \coordinate (a5)  at ( 4.25, 0.75);
            \coordinate (b5)  at ( 7.5,  0.75);

            \coordinate (shift) at (11,  0   );
            \coordinate (raise) at ( 0,  2.75);

            \coordinate (a0_)  at ($(a0)  + (shift)$);
            \coordinate (a00_) at ($(a00) + (shift)$);
            \coordinate (b0_)  at ($(b0)  + (shift)$);
            \coordinate (b00_) at ($(b00) + (shift)$);
            \coordinate (a1_)  at ($(a1)  + (shift)$);
            \coordinate (a15_) at ($(a13) + (shift) + (0, 1)$);
            \coordinate (b1_)  at ($(b1)  + (shift)$);
            \coordinate (b15_) at ($(b13) + (shift) + (0, 1)$);
            \coordinate (a2_)  at ($(a2)  + (shift)$);
            \coordinate (a20_) at ($(a20) + (shift)$);
            \coordinate (b2_)  at ($(b2)  + (shift)$);
            \coordinate (b23_) at ($(b23) + (shift)$);
            \coordinate (a3_)  at ($(a3)  + (shift)$);
            \coordinate (b3_)  at ($(b3)  + (shift)$);
            \coordinate (b34_) at ($(b34) + (shift)$);
            \coordinate (a4_)  at ($(a4)  + (shift)$);
            \coordinate (b4_)  at ($(b4)  + (shift)$);
            \coordinate (b40_) at ($(b40) + (shift)$);
            \coordinate (a5_)  at ($(a5)  + (shift) + (raise)$);
            \coordinate (a53_) at ($(a5)  + (shift) + (raise) - (0, 1)$);
            \coordinate (b5_)  at ($(b5)  + (shift) + (raise)$);
            \coordinate (b53_) at ($(b5)  + (shift) + (raise) - (0, 1)$);

            \fill[
                pattern=north east lines,
                pattern color=colorSecondary
            ]   (a20) -- (a2)%
                                 -- (b2)%
                        -- (b23)%
                        -- (a13) -- (a1)%
                                 -- (b1)%
                        -- (b13)%
                        -- (b3)%
                -- (b34)%
                -- (b4)%
            -- (b40);

            \fill[
                pattern=north east lines,
                pattern color=colorSecondary
            ]   (a20_) -- (a2_)%
                                  -- (b2_)%
                        -- (b23_)%
                        -- (a53_) -- (a5_)%
                                  -- (a15_) -- (a1_)%
                                            -- (b1_)%
                                  -- (b15_)%
                                  -- (b5_)%
                        -- (b53_)%
                        -- (b3_)%
                -- (b34_)%
                -- (b4_)%
            -- (b40_);

            \draw[line width=1.2] (a1)  --  (b1);
            \draw[line width=1.2] (a1)  --  (b1);
            \draw[line width=1.2] (a2)  --  (b2);
            \draw[line width=1.2] (a3)  -- (b23);
            \draw[line width=1.2] (a13) -- (b13);
            \draw[line width=1.2] (a4)  --  (b4);
            \draw[line width=1.2] (a5)  --  (b5);

            \draw[line width=1.2] (a1_)  --  (b1_);
            \draw[line width=1.2] (a1_)  --  (b1_);
            \draw[line width=1.2] (a2_)  --  (b2_);
            \draw[line width=1.2] (a3_)  -- (b23_);
            \draw[line width=1.2] (a53_) -- (b53_);
            \draw[line width=1.2] (a4_)  --  (b4_);
            \draw[line width=1.2] (a15_) -- (b15_);

            \dashedline{1.2}{b23}{a13}{2.7}{1.8}{1}{1}
            \dashedline{1.2}{b13}{b3}{2.7}{1.8}{1}{0}

            \dashedline{1.2}{b23_}{a53_}{2.7}{1.8}{1}{1}
            \dashedline{1.2}{b53_}{b3_}{2.7}{1.8}{1}{0}
            \dashedline{1.2}{a5_}{a15_}{2.7}{1.8}{0}{1}
            \dashedline{1.2}{b15_}{b5_}{2.7}{1.8}{1}{0}

            \contourlength{1.5pt}
            \contournumber{64}
            \coordinate[label=left:{\contour{white}{\( I_j \)}}] (l3) at (a3);
            \coordinate[label=left:{\contour{white}{\( I_j \)}}] (l3_) at (a3_);
            \coordinate[label=left:{\contour{white}{\( I_i \)}}] (l5) at (a5);
            \coordinate[label=left:\( I_i \)] (l5_) at (a5_);

          \end{tikzpicture}

        \caption{%
          Left: A sequence of intervals (higher is earlier in the sequence) where the interval~\( I_j \) comes before its subinterval~\( I_i \).
          The exposed part~\( E_j \) of the interval~\( I_j \) is drawn with dashed lines while the exposed part of~\( I_i \) is empty.
          Right: Moving \(I_i\) to be right before~\( I_j \) yields the exposed part \( E_j \setminus I_i \) for~\(I_j\) and \( E_j \cap I_i \) for \(I_i\).
        }
        \label{fig:superadditive}
      \end{figure}

      Next we show that the number of exposed parts to consider is polynomially bounded.

      \begin{lemma}
        \label{lem:sup-add-subset}
        Let~\( \E_\I' \subseteq \E_\I\) be the set of all non-empty exposed parts appearing in sequences where no interval comes before any of its proper subintervals.
        Then \( \abs{\E_\I'} \in \O{n^2} \) and we can compute~\( \E_\I' \) in \( \O{n^3} \) time.
      \end{lemma}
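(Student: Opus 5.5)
The plan is to describe the exposed parts that can occur in sequences with the property of \cref{lem:sup-add-property} by a pair of endpoints, and then count and verify these pairs, much as in \cref{lem:interval-exposed-subset}.

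\emph{Structure of an exposed part.} Fix such a sequence and an interval~$I_j=[a_j,b_j)$ with non-empty exposed part~$E$. By the property of \cref{lem:sup-add-property}, every proper subinterval of~$I_j$ comes before~$I_j$. The other intervals that come before~$I_j$ fall into three classes.
\begin{itemize}
\item Intervals containing~$I_j$ (including duplicates of~$I_j$). If one of these is earlier, then $E=\es$, so none are.
\item Intervals crossing~$a_j$. Their union meets~$I_j$ in a prefix $[a_j,x)$.
\item Intervals crossing~$b_j$. Their union meets~$I_j$ in a suffix $[y,b_j)$.
\end{itemize}
Hence $E=[x,y)\setminus U_j$, where $U_j$ is the union of all proper subintervals of~$I_j$.

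\emph{Normalization.} Let $x^*$ be the leftmost point of~$E$ and $y^*$ the right end of its last component. Both are start or end points of input intervals. I claim that
\[
  E=[x^*,y^*)\setminus C\r{\I_{[x^*,y^*)}}.
\]
For the inclusion $E\subseteq[x^*,y^*)\setminus C(\I_{[x^*,y^*)})$: every interval contained in $[x^*,y^*)\subseteq I_j$ is a proper subinterval of~$I_j$, unless $[x^*,y^*)=I_j$. In that case such an interval is either a proper subinterval or a duplicate of~$I_j$, and duplicates were excluded above. So every such interval comes before~$I_j$, which gives the inclusion. For the reverse inclusion, take a point of $[x^*,y^*)\cap I_j$ covered by an earlier interval~$I$. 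If $I$ crosses~$a_j$, it ends at most at $x\le x^*$. If $I$ crosses~$b_j$, it starts at least at $y\ge y^*$. So $I\subseteq I_j$. If $I$ were not contained in $[x^*,y^*)$, it would contain~$x^*$ or points just left of~$y^*$, which lie in~$E$, a contradiction. Consequently every member of~$\E_\I'$ is determined by its pair $(x^*,y^*)$ of endpoints, which gives $|\E_\I'|\in\O{n^2}$.

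\emph{Computing and verifying candidates.} For each of the $\O{n^2}$ pairs, I form the candidate $[x^*,y^*)\setminus C(\I_{[x^*,y^*)})$ in $\O{n}$ time, using the sorted endpoints from \cref{lem:preproc}. I then test whether it is actually realized. This means deciding whether some~$I_j\supseteq[x^*,y^*)$ exists such that
\begin{itemize}
\item $[a_j,x^*)$ is covered by intervals that are proper subintervals of~$I_j$ or that cross~$a_j$ and end at most at~$x^*$, and
\item symmetrically, $[y^*,b_j)$ is covered by the analogous intervals,
\end{itemize}
while the candidate's gaps are exactly covered by subintervals. Mirroring the proof of \cref{lem:interval-exposed-subset}, I use the precomputed covered intervals to find the earliest-starting covered interval ending at~$x^*$. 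This discards in $\O{n}$ time every~$I_j$ that cannot have an exposed part starting at~$x^*$, and the analogous test at~$y^*$ handles the right side, for $\O{n^3}$ time in total.

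\emph{Main obstacle.} I expect the verification step to be the hardest part. The sets allowed to cover $[a_j,x^*)$ depend on~$I_j$: proper subintervals of~$I_j$ may extend beyond~$x^*$, whereas crossing intervals must end at most at~$x^*$. So the earliest covered interval ending at~$x^*$ does not immediately answer the question. I would first try to show that any realizing~$I_j$ can be taken to be minimal among the intervals containing~$[x^*,y^*)$. Then a single left-to-right sweep per candidate suffices, so each verification costs $\O{n}$.
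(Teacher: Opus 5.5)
Your size bound is essentially the paper's argument, and it is more complete, since you prove both inclusions where the paper states uniqueness in one line. It has one slip, though. The identity $E=[x^*,y^*)\setminus C(\I_{[x^*,y^*)})$ is false whenever $[x^*,y^*)=I_j$, for example $[1,2)$ placed before $[0,3)$, or any interval whose exposed part is the whole interval. In that case $I_j$ itself belongs to $\I_{[x^*,y^*)}$, so the right-hand side is empty. You excluded duplicates of $I_j$ but forgot $I_j$. Replace $\I_{[x^*,y^*)}$ by the \emph{proper} subintervals of $[x^*,y^*)$, as the paper does. Your argument then goes through: when $[x^*,y^*)\subsetneq I_j$, no input interval equals $[x^*,y^*)$, since it would precede $I_j$ and cover $E$.

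The genuine gap is the $\O{n^3}$ computation. You propose to verify each of the $\O{n^2}$ candidate pairs, but you never give a verification procedure. As you note yourself, the ``earliest covered interval ending at $x^*$'' test from \cref{lem:interval-exposed-subset} does not transfer, because the forced set $U_j$ depends on $I_j$ and may cover $x^*$.

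Your suggested fix does not close the gap. A realizing $I_j$ is indeed automatically minimal among intervals containing $[x^*,y^*)$, since a proper subinterval of $I_j$ containing $[x^*,y^*)$ would precede $I_j$ and cover $E$. But there can be $\Theta(n)$ pairwise incomparable such minimal intervals, each with its own $U_j$; for instance, $[i,n+i)$ for $0\le i<n$ all contain $[n-1,n)$. So neither the single sweep nor the $\O{n}$ per-pair bound is justified.

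The missing idea, which is the paper's route, is to \emph{generate} the exposed parts instead of verifying candidates. Fix $I=[a,b)$. By your own structural observation, only two quantities matter: the largest end point $s$ of an earlier interval crossing $a$, and the smallest start point $t$ of an earlier interval crossing $b$. Conversely, any single crossing intervals $[a',s)$ and $[t,b')$ can be realized. Place first the closure under proper subintervals of $\{[a',s),[t,b')\}$ together with all proper subintervals of $I$; this closure contains no interval containing $I$. Then place $I$, then the remaining intervals, sorting each block by length.

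Hence the exposed parts of $I$ in admissible sequences are exactly the non-empty sets $(I\cap[s,t))\setminus U$, where $U$ is the union of the proper subintervals of $I$, $s\in\{a\}\cup\{\text{end points of intervals crossing }a\}$, and $t\in\{b\}\cup\{\text{start points of intervals crossing }b\}$. This gives $\O{n^2}$ candidates per interval and $\O{n^3}$ in total. Duplicates are removed with an $n\times n$ matrix indexed by the endpoint pair from your size bound.
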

      \begin{proof}
        As regards the size of~$\E_\I'$, consider an arbitrary exposed part~$E\in\E_\I'$ of an interval~$I$ and let~$s$ be the start point of the first interval component of~$E$ and let~$t$ be the end point of the last interval component.
        Note that there cannot be another different exposed part also starting at~$s$ and ending at~$t$ since all proper subintervals of~$[s,t)$ in~$\I$ must come before any interval with such exposed part.
        That is, each exposed part in~$\E_\I'$ is uniquely defined by the start point of its first interval component and the end point of its last interval component.
        This yields~$|\E_\I'|\in\O{n^2}$.

        To compute~$\E_\I'$, we iterate over all intervals~$I=[a,b)\in\I$.
        We first compute the covered area~$C$ of all proper subintervals of~$I$ in~$\I$ in~$\O{n}$ time (assuming the intervals are sorted) and add the exposed part~$E\coloneqq I\setminus C$ to~$\E_\I'$ in~$\O{n}$ time.
        Then, we iterate over each pair of intervals~$[a',s)\in\I$~and $[t,b')\in\I$ with~$a'< a \le s < t \le b < b'$ and remove them from~$E$ and add the result to~$\E_\I'$.
        As the exposed parts are uniquely defined by~$s$ and~$t$, we can check for duplicates in constant time when storing exposed parts via pairs~$(s,t)$ in an~$n\times n$ matrix.
        Overall, this takes~$\O{n^3}$ time.
      \end{proof}
\noindent
      So, again we obtain a polynomial-time algorithm.

      \begin{theorem}
        \label{thm:superadditive}
        \fIO{} can be solved in \( \O{n^5} \) time for every~\( f \) where~$f-f(0)$ is superadditive.
      \end{theorem}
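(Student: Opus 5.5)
The plan mirrors the proof of \cref{thm:interval-exposed}: run \cref{alg:compute-opt} with the set~$\E_\I'$ from \cref{lem:sup-add-subset} in place of~$\E_\I$, and show that this restriction does not change any computed value. By \cref{lem:sup-add-subset}, $\E_\I'$ has size $\O{n^2}$ and can be computed in $\O{n^3}$ time. The covered intervals and the sizes $|\I_C|$ are obtained in $\O{n^3}$ time by \cref{lem:preproc}. The table has $\O{n^2}$ entries, and each entry scans $\O{n^2}$ exposed parts with $\O{n}$ work per part. This gives $\O{n^5}$ overall.

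For correctness, replacing $\E_\I$ by a subset can only increase each computed $\OPT{C}$. Every value $\omega$ the modified algorithm considers is still realized by some sequence of~$\I_C$, by the ``$\le$'' direction of \cref{lem:recursion}. The computed values therefore remain upper bounds. For the matching lower bound I would argue by induction along the topological order. Fix a covered interval~$C$. Applying \cref{lem:sup-add-property} to the instance~$\I_C$ gives an optimal sequence of~$\I_C$ in which no interval comes before any of its proper subintervals. Let $I_k$ be the last interval in this sequence with non-empty exposed part~$E$, and move completely covered later intervals forward exactly as in the proof of \cref{lem:recursion}.

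The key point is that $E$ belongs to~$\E_\I'$. Its sequence is a sequence of~$\I_C$, not of~$\I$, so I would extend it to a sequence of~$\I$ by appending all intervals of $\I\setminus\I_C$ at the end. This preserves the property ``no interval before a proper subinterval'': no interval of~$\I_C$ contains an interval outside~$C$, because any subinterval of an interval contained in~$C$ is itself contained in~$C$. The exposed parts of the $\I_C$-intervals are unchanged by the extension, so $E\in\E_\I'$. Note also that the moves in \cref{lem:recursion} only shift completely covered intervals earlier, directly before~$I_k$. Such a shift could violate the subinterval property, but we do not need that: only $E$ itself has to lie in~$\E_\I'$, and $E$ arises from the sequence before the shift. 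The subinstances $\I_{C'_l}$ are then solved optimally by the induction hypothesis, whatever their sequences are, so the restricted minimum attains $\opt(\I_C)$.

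The main obstacle is exactly this transfer from the subinstance~$\I_C$ to membership in $\E_\I'$, which is defined relative to the whole instance. In particular, one must check that the witnessing interval~$I_k$ and the decomposition used by the algorithm, namely Line~\ref{line:checkE} and the component lookups, match a sequence of~$\I_C$ that satisfies the property of \cref{lem:sup-add-property}. I would handle this by invoking \cref{lem:sup-add-property} separately for each~$\I_C$, rather than once for~$\I$.
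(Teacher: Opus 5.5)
Your proposal is correct and takes the same approach as the paper, which simply runs \cref{alg:compute-opt} over the set $\E_\I'$ from \cref{lem:sup-add-subset} and gets $\O{n^3\cdot\abs{\E_\I'}}=\O{n^5}$ time; your induction, which applies \cref{lem:sup-add-property} to each subinstance $\I_C$ and pads the sequence with $\I\setminus\I_C$, spells out why restricting the recursion to $\E_\I'$ is correct, a point the paper leaves implicit. One small detail is missing: the appended intervals of $\I\setminus\I_C$ must themselves be ordered compatibly with inclusion (e.g., by nondecreasing length), since otherwise the padded sequence of $\I$ can violate the subinterval property among those appended intervals.
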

      \begin{proof}
        By \cref{lem:sup-add-subset}, we can compute the set~\( \E_\I' \) in \( \O{n^3} \) time.
        Iterating over~$\E_\I'$ in \cref{alg:compute-opt}, we obtain a solution in $\O{n^3\cdot\abs{\E_\I'}}=\O{n^5}$ time.
      \end{proof}

      Again, it is unlikely that polynomial time can be achieved for all superadditive functions, since this case is \NPhard{} (as shown in \Cref{ssec:nphard}).

    \subsection{Pairwise Intersecting Intervals}
      \label{ssec:pairwise-intersecting}

      So far, we developed polynomial-time algorithms for special cost functions.
      As a side result, we show that intervals that pairwise touch or intersect each other are also solvable in polynomial time for every function~$f$.
      Note that in this case all covered areas are intervals.
      Using~\cref{alg:compute-opt}, one can solve such instances in polynomial time.

      \begin{theorem}
        \label{thm:pairwise-intersecting}
        Any \fIO{} instance where each covered area is a single interval can be solved in \( \O{n^6} \) time.
      \end{theorem}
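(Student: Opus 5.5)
The plan is to run \cref{alg:compute-opt} unchanged, after showing that in this setting the set~$\E_\I$ of exposed parts has only polynomial size and can be enumerated in polynomial time. Correctness then follows directly from \cref{lem:DP-correct}, and the running time becomes $\O{n^3\cdot\abs{\E_\I}}$ plus the time to compute~$\E_\I$. The target is therefore the bound $\abs{\E_\I}\in\O{n^3}$, which gives the claimed $\O{n^6}$.

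\textbf{Step 1: shape of an exposed part.} Let $E=E_{\sigma,j}=I_j\setminus C(\I')$ be an exposed part, where $\I'$ is the set of intervals before~$I_j$. By assumption $C(\I')$ is a single interval $[c,d)$, or empty. Removing one interval from $I_j=[a_j,b_j)$ leaves one of three shapes:
\begin{itemize}
  \item $I_j$ itself;
  \item a single interval $[a_j,c)$ or $[d,b_j)$;
  \item two components $[a_j,c)\cup[d,b_j)$.
\end{itemize}
In every case $E$ is determined by~$j$ together with at most two points $c,d$. Each of these is a start point or an end point of an input interval, since $[c,d)=C(\I')$. This gives $\abs{\E_\I}\le n\cdot\O{n^2}=\O{n^3}$, taking $c$ among the $n$ start points and $d$ among the $n$ end points.

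\textbf{Step 2: enumerating $\E_\I$.} For each interval $I_j$ and each covered interval $C=[c,d)\in\C^*_\I$ (there are $\O{n^2}$ of these, computed via \cref{lem:preproc}), we need to decide whether some subset $\I'\subseteq\I\setminus\{I_j\}$ has $C(\I')=C$. For the DP's purposes it actually suffices to test whether $C$ is the covered area of $\I_C\setminus\{I_j\}$. By monotonicity, if any valid $\I'$ exists, then the maximal one $\I_C\setminus\{I_j\}$ also covers exactly~$C$. This coverage test takes $\O{n}$ time with the sweep from \cref{lem:preproc}. If the test succeeds, we add $I_j\setminus C$, provided it is nonempty. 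We also add $I_j$ itself, from the empty prefix.

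This costs $\O{n^4}$ in total, or $\O{n^5}$ if duplicates are removed naively, which still fits within $\O{n^6}$. Duplicates can be handled with a table indexed by $(j,c,d)$ and a final comparison. Note also that superfluous candidates would be harmless in any case: \cref{alg:compute-opt} only uses $E$ together with the checks in Lines~\ref{line:checkE} and~\ref{line:checkC'}. Each candidate passing these checks corresponds to an actual sequence, by the ``$\le$'' direction of \cref{lem:recursion}.

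\textbf{Step 3: conclusion.} Substituting $\abs{\E_\I}\in\O{n^3}$ into the analysis of \cref{thm:general-time} gives $\O{n^2}$ table entries, each costing $\O{n}\cdot\abs{\E_\I}=\O{n^4}$, for a total of $\O{n^6}$.

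The main subtlety I expect is the justification in Step~1 that the covered area before~$I_j$ is always a single interval. This holds because the hypothesis is stated for every covered area, so it applies to every prefix. Exposed parts with two components, such as $[a_j,c)\cup[d,b_j)$, must still be handled correctly. In Line~\ref{line:c-e}, $C\setminus E$ then consists of at most three components, each of which is checked against~$\C^*_\I$, and the argument goes through unchanged.
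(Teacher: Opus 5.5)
Your proposal is correct and follows the same route as the paper. You observe that every exposed part is $I_j$ minus a single covered interval, so it has at most two components and $\abs{\E_\I}\in\O{n^3}$. You enumerate these exposed parts by going through the covered intervals, and then run \cref{alg:compute-opt} in $\O{n^3\cdot\abs{\E_\I}}=\O{n^6}$ time. Your test of whether $C(\I_C\setminus\{I_j\})=C$, and your remark that superfluous candidates are harmless, spell out details that the paper's one-line ``going through all covered intervals'' leaves implicit.
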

      \begin{proof}
        As all covered areas are intervals, there are only \( \O{n^2} \) many.
        Hence, there are \( \O{n^2} \) exposed parts for every interval and we can find them by just going through all covered intervals.
        Additionally, all exposed parts have at most two interval components.
       Thus, \( \abs{\E_\I} \in \O{n^3} \) and \cref{alg:compute-opt} solves the instance in \( \O{n^3 \cdot \abs{\E_\I}} = \O{n^6} \) time.
      \end{proof}

      We remark that the running time can be improved to~$\O{n^3}$ with a more specialized dynamic programming approach than our general \cref{alg:compute-opt}.
      As all covered areas are intervals, we can use the following decomposition:
      \[
        \opt\r{\I_{C}} = \min\c{\opt\r{\I_{C'}} + \f{\abs{I\setminus C'}} + \r{\abs{\I_C} - \abs{\I_{C'}} - 1}\cdot\f{0}}\text{,}
      \]
      where $I\in\I_C$, $C\ne C'\in\C_{\I_C}$ and $C'\cup I = C$.
      (This is equivalent to guessing the next interval $I$ in an optimal sequence that has covered $C'$ so far.)
      Like in \cref{alg:compute-opt}, we have a table of size \( \O{n^2} \).
      But here we can iterate over \( \O{n} \) intervals instead of \( \O{n^3} \) exposed parts.
      Note, that \citetdurr{} used a similar approach to achieve \( \O{n^3} \) time for agreeable intervals.
      Our more general result in \cref{thm:interval-exposed} solves agreeable intervals in \( \O{n^5} \) time.
      Clearly, there is room to improve our polynomial running times.

  \section{Hardness Results}
    \label{sec:hardness}

    In this section, we complement our algorithmic results from the previous sections with running time lower bounds and \NPhard{ness}.

    \subsection{Exponential Running Time Lower Bound}
      \label{ssec:runtime-lb}
      We show that the exponential part of~$\O{2^n}$ in the running time of~\cref{alg:compute-opt} from \cref{sec:arbitrary-cost} is basically optimal.
      More precisely, we prove that every deterministic algorithm that computes an optimal interval ordering for every~$f$ without knowledge about~$f$ (that is, it is only provided an \emph{oracle} for~$f$ which outputs~$f(x)$ in constant time) requires at least~$2^{n-1}$ oracle queries (in the worst case).
      To prove this lower bound, we show that~\( f \) must be evaluated at all distinct lengths of exposed parts, of which there can be exponentially many.

      \begin{lemma}
        \label{lem:lower-bound}
        For all \( n \ge 2\), there exists a set~$\I$ of~$n$ intervals with~\(|\{|E|\mid E\in \E_\I\}|\ge 2^{n-1} \) such that, for each~$E\in\E_\I$, there is an ordering of~$\I$ where no interval has an exposed part of length~$|E|$.
      \end{lemma}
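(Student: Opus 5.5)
We construct an explicit family: one long interval $I_0$ together with $n-1$ small, pairwise disjoint subintervals $J_1,\ldots,J_{n-1}$ of $I_0$ whose lengths are ``binary-independent''. Concretely, take $I_0=[0,L)$ with $L$ large, say $L=2^{n+1}$, and for $i\in[n-1]$ let $J_i=[3\cdot 2^{i},3\cdot 2^{i}+2^{i-1})$. This gives length $|J_i|=2^{i-1}$, and the $J_i$ are disjoint and lie strictly inside $I_0$ (we need only non-overlap and that the total length $2^{n-1}-1$ is far below $L$).

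For any subset $S\subseteq[n-1]$, order the intervals $J_i$ with $i\in S$ first, then $I_0$, then the rest. The exposed part of $I_0$ is $I_0\setminus\bigcup_{i\in S}J_i$, of length $L-\sum_{i\in S}2^{i-1}$. Since binary representations are unique, the $2^{n-1}$ subsets give $2^{n-1}$ distinct lengths. All of these lie in $(L-2^{n-1},L]$, which is disjoint from the lengths $2^{i-1}\le 2^{n-2}$ of the small exposed parts. So already the exposed parts of $I_0$ yield $|\{|E|\mid E\in\E_\I\}|\ge 2^{n-1}$.

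Next we establish the avoidance property. Note that $\E_\I$ consists exactly of the sets $J_i$ (the exposed part of $J_i$ is either $J_i$ or empty, since $J_i\subseteq I_0$ and the $J_i$ are disjoint) and the sets $I_0\setminus\bigcup_{i\in S}J_i$. Fix $E\in\E_\I$.

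If $E=I_0\setminus\bigcup_{i\in S}J_i$, we need an ordering in which no interval has exposed length $|E|$. Any ordering that does not place exactly the set $S$ before $I_0$ works. The small intervals have exposed length $0$ or $2^{i-1}$, which is less than $|E|$. Meanwhile $I_0$ has exposed length $L-\sum_{i\in S'}2^{i-1}\ne|E|$ for $S'\ne S$. For instance, take $S'=\emptyset$ if $S\ne\emptyset$ (put $I_0$ first), and $S'=[n-1]$ if $S=\emptyset$ (put $I_0$ last).

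If $E=J_k$ with $|E|=2^{k-1}$, put $I_0$ first. Then every $J_i$ has exposed length $0$, and $I_0$ has length $L\ne 2^{k-1}$.

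The construction is the main step; the only care needed is checking that exposed lengths of different kinds never coincide. This is why $L$ is chosen larger than twice the total length of the small intervals, and why the lengths are distinct powers of two. For $n\ge 2$, we have $n-1\ge1$ small intervals, so everything is well defined.
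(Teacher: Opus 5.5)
Your proof is correct and takes essentially the same approach as the paper: one long interval contains $n-1$ pairwise disjoint subintervals of power-of-two lengths, uniqueness of binary representations gives $2^{n-1}$ distinct exposed lengths for the long interval, and each length is avoided by choosing a different set of subintervals to place before it. The difference is cosmetic: the paper uses touching intervals $[2^i,2^{i+1})$ inside $[0,2^n)$, where some of the long interval's exposed lengths coincide with the small ones, and handles this with the orderings $(I_1,\ldots,I_n)$ and ``move $I_i$ to the end''; your much longer $I_0$ keeps the two kinds of lengths separate, which makes your avoidance case analysis slightly simpler.
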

      \begin{proof}
        Let \( \I = \c{I_1, \ldots, I_n} \) with \( I_i \coloneqq \interval{2^i, 2^{i+1}} \) for all \( i \in[n-1]\) and \( I_n = \interval{0, 2^n} \).
        Since~$I_1,\ldots,I_{n-1}$ are pairwise disjoint subintervals of~$I_n$, each~$I_i$ with~$i\in[n-1]$ has exposed parts of length~$2^i$ and~$0$.
        Moreover,~$I_n$ has exposed parts of length \( 2^n - \sum_{i \in X} 2^i \), where \( X \subseteq \c{1, \ldots,n-1} \).
        Clearly, there are~\( 2^{n-1} \) possible sets~\( X \) and each \( \sum_{i \in X}2^i \) is unique (an even number between 0 and~$2^n-2$).
        Hence, the possible lengths of exposed parts are all even numbers between 0 and~$2^n$.

        For each of these lengths, there is an interval ordering where no exposed part has this length.
        To see this, note that the sequence~$(I_1,\ldots,I_n)$ has only exposed parts of length~$2^i$ for~$i\in[n-1]$.
        And for each~$i\in[n-1]$, the sequence~$(I_1,\ldots,I_{i-1},I_{i+1},\ldots,I_n,I_i)$ has no exposed part of length~$2^i$.
      \end{proof}

      We now show that an algorithm with only oracle access to~$f$ has to evaluate~\( f \) at all distinct lengths of exposed parts.

      \begin{theorem}
        \label{thm:lower-bound}
        Let~$A$ be a deterministic algorithm computing an optimal interval ordering for arbitrary~$f$ with oracle access to~$f$ only.
        Then,~$A$ requires at least~$2^{n-1}$ oracle queries in the worst case.
      \end{theorem}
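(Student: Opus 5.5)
We use an adversary argument on the instance~$\I$ from \cref{lem:lower-bound}. Let~$L\coloneqq\{|E|\mid E\in\E_\I\}$ be its set of exposed-part lengths, with~$|L|\ge 2^{n-1}$. We show that if~$A$ leaves some positive length~$\ell\in L$ unqueried, then the adversary can answer the queries consistently with two cost functions that have disjoint sets of optimal orderings. Hence~$A$ must fail on one of them.

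First, we run~$A$ on~$\I$ against the constant function~$f\equiv 0$; every ordering is then optimal. Because~$A$ is deterministic, its queries and its output ordering~$\sigma$ are fixed. Suppose~$A$ makes fewer than~$2^{n-1}$ queries. Every length in~$L$ is even, so the lengths in~$L$ are exactly the even numbers from~$0$ to~$2^n$ except~$2^n-\sum_{i=1}^{n-1}2^i=2$. Thus~$L$ together with the length~$2$ exposed by~$I_1$ covers all~$2^{n-1}+1$ even numbers in~$[0,2^n]$. If~$A$ makes fewer than~$2^{n-1}$ queries, then at least two such lengths are unqueried, and in particular some positive length~$\ell\in L$ with~$\ell\ne 0$ is never queried.

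Next, we case on whether the output~$\sigma$ has an interval with exposed length~$\ell$.
\begin{itemize}
\item If some interval in~$\sigma$ has exposed length~$\ell$, define~$f'(\ell)\coloneqq 1$ and~$f'(x)\coloneqq 0$ otherwise. By \cref{lem:lower-bound} there is an ordering with no exposed part of length~$\ell$, so its cost is~$0$. The cost of~$\sigma$ is at least~$1$, so~$\sigma$ is not optimal for~$f'$.
\item If no interval in~$\sigma$ has exposed length~$\ell$, define~$f'(\ell)\coloneqq -1$ and~$f'(x)\coloneqq 0$ otherwise. Since~$\ell\in L$, some ordering attains an exposed part of length~$\ell$, so its cost is at most~$-1$. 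The cost of~$\sigma$ is~$0$, so~$\sigma$ is again suboptimal.
\end{itemize}

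In both cases~$f'$ agrees with~$f\equiv 0$ on every queried point. The deterministic~$A$ therefore behaves identically on~$f'$ and outputs the same~$\sigma$, which is not optimal, a contradiction. So~$A$ needs at least~$2^{n-1}$ queries.

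The main obstacle is the counting: we must make sure that enough distinct lengths remain unqueried. For the positive-cost case we need the ordering from \cref{lem:lower-bound} that avoids length~$\ell$. For the negative-cost case we only need~$\ell\in L$, which holds by definition. If the count falls short by one, the fallback is to use both unqueried even lengths; we must also confirm that~$f'$ is allowed to take negative values. This is fine because the problem states~$f\colon\nonneg\to\real$. If nonnegativity were required, we would instead shift every value by a constant.
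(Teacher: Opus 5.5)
Your core argument is the paper's: run $A$ on the instance of \cref{lem:lower-bound}, pick an exposed length $\ell$ that is never queried, and change $f$ only at $\ell$. You raise it if the output $\sigma$ uses $\ell$ and lower it otherwise, so $\sigma$ becomes suboptimal while the deterministic run stays the same. Fixing the base function $f\equiv 0$ and using the values $\pm 1$ is a slightly cleaner version of the paper's ``large enough $|M|$'' step, and both of your cases are correct.

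The counting paragraph, however, is wrong as written and should be replaced.
\begin{itemize}
\item Since $\E_\I$ excludes $\emptyset$, you have $0\notin L$.
\item You also have $2\in L$: it is the exposed length of $I_1$ when $I_1$ comes first, and of $I_n$ when $I_n$ comes last, since $[0,2^n)\setminus[2,2^n)=[0,2)$.
\item Hence $L=\{2,4,\ldots,2^n\}$, with exactly $2^{n-1}$ elements, not ``the even numbers from $0$ to $2^n$ except $2$''.
\item The phrase ``$L$ together with the length $2$ exposed by $I_1$'' is incoherent, because $I_1$'s exposed parts already belong to $\E_\I$.
\item Under your own description the deduction actually fails: the two unqueried values among $\{0,2,\ldots,2^n\}$ could be $0$ and $2$, and neither would be a positive element of your $L$.
\end{itemize}

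The correct step is one line. By \cref{lem:lower-bound}, $|L|\ge 2^{n-1}$, so fewer than $2^{n-1}$ queries leave some $\ell\in L$ unqueried. Every element of $L$ is automatically positive, since $\E_\I$ excludes $\emptyset$. Membership $\ell\in L$ is all your two cases use: the lemma gives an ordering avoiding $\ell$, and $\ell\in L$ gives an ordering attaining it. So no separate positivity argument and no ``fallback'' to a second unqueried length is needed.
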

      \begin{proof}
        Consider the interval set~$\I$ from \cref{lem:lower-bound} for~$n\ge 2$ and assume towards a contradiction that~$A$ outputs an optimal sequence~$\sigma$ after less than~$2^{n-1}$ oracle queries, that is, without evaluating~\( f \) for all distinct lengths of exposed parts.
        Let~\( L \) be the length of an exposed part for which~$f(L)$ is never queried by~$A$.
        We define a new cost function~\( f^* \) which only differs from~$f$ at the value~$L$ such that~$\sigma$ is suboptimal for~$f^*$ on~$\I$.
        Note that~$A$, however, will still output~$\sigma$ on~$\I$ for~$f^*$ since it is deterministic and never queries the value~$\fu{f^*}{L}$ and~$f^*$ equals~$f$ on all other values (hence, we obtain a contradiction).

        If an exposed part of an interval for~$\sigma$ has length~$L$, then we set $\fu{f^*}{L}\coloneqq M>0$.
        Clearly, for large enough~$M$, every sequence involving the exposed part length~$L$ is not optimal for~$f^*$ since there also exists a sequence avoiding all exposed parts of lengths~$L$ (by construction of~$\I$).

        Otherwise, if no exposed part appearing in~$\sigma$ has length~$L$, then we set $\fu{f^*}{L}\coloneqq M < 0$ small enough.
        Clearly, now every sequence not involving an exposed part of length~$L$ is suboptimal for~$f^*$ (by analogous arguments as above).
      \end{proof}

      Note that the above argument relies on choosing an appropriate function~$f^*$ which must have specific values at only finitely many points.
      Hence, the lower bound also holds for algorithms that solve \fIO{} (with oracle access) only for a certain class of functions such as piecewise constant or polynomials.

    \subsection{NP-Hardness}
      \label{ssec:nphard}

      \citetdurr{} showed \NPhard{ness} for the cost function~$f$ where~\( \f{x} = 0 \) if~\( x \) is a power of two and~\( \f{x} = x \) otherwise.
      This function is rather artificial not being superadditive or monotone or continuous (however, it works for integral intervals).
      They reduced from the well-known \NPhard{} \nameref{pr:partition} problem~\cite{karp1972reducibility}.

      \begin{problem}[{Partition}]
        \label{pr:partition}
        \pinput{A multiset \( X = \c{x_1, \ldots, x_n} \) of positive integers.}
        \pquestion{Is there a subset \( X' \subseteq X \) such that \( \sum_{x\in X'}{x} = \tfrac{1}{2} \cdot \sum_{i=1}^{n}{x_i} \)?}
      \end{problem}\noindent

      We strengthen the \NPhard{ness} of \nameref{pr:f-interval-ordering}
      including continuous piecewise linear cost functions.

      \begin{theorem}
        \label{thm:general-reduction}
        \fIO is \NPhard{} if there exist $\varepsilon>0$, $x_0>2\varepsilon$, and $c_1< c_2< c_3$ and~$h$,~$h'$ such that
        \begin{itemize}
          \item \( f(x) = c_2\cdot x\) for \( x\in \s{0, 2\varepsilon} \),
          \item \( f(x) = c_1\cdot x + h\) for \(x\in \s{x_0 - \varepsilon, x_0} \), and
          \item \( f(x) = c_3\cdot x + h'\) for \(x\in \s{x_0, x_0 + \varepsilon} \).
        \end{itemize}
      \end{theorem}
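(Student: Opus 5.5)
We reduce from \nameref{pr:partition}, using a laminar construction in the spirit of \citetdurr{}. Given $X=\{x_1,\ldots,x_n\}$ with $S\coloneqq\sum_i x_i$, we first scale: pick $\delta>0$ so small that $\delta S\le \varepsilon$. We then build one big interval $I_0$ of length $x_0+\delta S/2$, and inside it $n$ pairwise disjoint subintervals $J_i$ of length $\delta x_i$. For example, place the $J_i$ consecutively starting at the left end of $I_0$, which fits since $\delta S\le\varepsilon<x_0$. The intervals are laminar, and the exposed part of $I_0$ has length $x_0+\delta S/2-\delta\sum_{i\in Y}x_i$, where $Y$ is the set of $J_i$ placed before $I_0$. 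Each $J_i$ placed before $I_0$ costs $f(\delta x_i)=c_2\delta x_i$, because $\delta x_i\le 2\varepsilon$. Each $J_i$ placed after $I_0$ costs $f(0)=0$.

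The total cost of an ordering is therefore $c_2\delta\sigma_Y+f(x_0+\delta S/2-\delta\sigma_Y)$ with $\sigma_Y\coloneqq\sum_{i\in Y}x_i\in[0,S]$. Writing $y\coloneqq x_0+\delta S/2-\delta\sigma_Y\in[x_0-\varepsilon,x_0+\varepsilon]$, the cost becomes $g(y)\coloneqq c_2(x_0+\delta S/2-y)+f(y)$. On $[x_0-\varepsilon,x_0]$ its slope is $c_1-c_2<0$, and on $[x_0,x_0+\varepsilon]$ its slope is $c_3-c_2>0$. So $g$ is strictly minimized over this range exactly at $y=x_0$, that is, at $\sigma_Y=S/2$. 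Continuity of $f$ at $x_0$ is implicit in the hypothesis, since both linear pieces are given on closed intervals containing $x_0$. We set $W\coloneqq g(x_0)=c_2\delta S/2+f(x_0)$. Then a solution with cost at most $W$ exists if and only if some subset $Y$ has $\sigma_Y=S/2$, i.e.\ iff the \nameref{pr:partition} instance is a yes-instance. Only this one value of $y$ is ever compared, so nothing beyond strict monotonicity on each side is needed.

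Three points remain to check. First, every achievable $y$ lies in the domain where the given linear descriptions hold; this follows from $\delta S/2\le\varepsilon/2$. Second, every ordering yields exactly the cost formula above: a $J_i$ after $I_0$ is fully covered, and a $J_i$ before $I_0$ is fully exposed because the $J_i$ are disjoint. Third, the reduction is polynomial. Here we must choose $\delta$ and the endpoints as numbers of polynomial encoding size. Since $\varepsilon$ and $x_0$ are fixed constants of $f$, taking $\delta=\varepsilon/S$ and using rational endpoints expressed in terms of $x_0$ and $\varepsilon$ works; this treats $f$ as fixed, so $W$ is a fixed expression computable from the constants.

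The main obstacle I expect is the encoding subtlety: $x_0$, $\varepsilon$ and $f(x_0)$ may be irrational. I would handle it by noting that $f$ is fixed, so these constants are part of the problem definition. Alternatively, $x_0$ could be replaced by a rational point inside the left piece, but that breaks the kink argument. So I would simply assume the model of the paper, which neglects arithmetic on reals.
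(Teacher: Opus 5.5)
Your proof is correct and follows essentially the same route as the paper. Both reduce from Partition, place disjoint scaled copies of the $x_i$ inside the linear region near $0$ within one big interval, and use the kink at $x_0$ to force the big interval's exposed length to be exactly $x_0$. The differences are cosmetic: you scale the small intervals to total length $\varepsilon$ instead of $2\varepsilon$, and you package the paper's two-case slope comparison as the V-shaped function $g$.
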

      \begin{proof}
        Consider an instance \( X = \c{x_1, \ldots, x_n} \) of \nameref{pr:partition}.
        We build an instance for \nameref{pr:f-interval-ordering}.
        Let \( S_k \coloneqq \sum_{i=1}^{k}{x_i} \) for all \( k \in \c{1, \ldots, n} \) and \( S_0 \coloneqq 0 \).
        We add the intervals \( I_k \coloneqq \interval{ \tfrac{2\varepsilon S_{k-1}}{S_n}, \tfrac{2\varepsilon S_k}{S_n} } \) for all \( k \in \c{1, \ldots, n} \) to our instance.
        These intervals are pairwise disjoint and cover the interval \( \interval{0, 2\varepsilon} \).
        They represent the integers from \( X \), as \( \abs{I_k} = \tfrac{2\varepsilon  x_k}{S_n} \) holds for all \( k \in \c{1, \ldots, n} \).
        We further add the interval \( I_0\coloneqq [0,x_0 + \varepsilon)\) and set~$W\coloneqq f(\varepsilon)+f(x_0)$.
        Note that \( I_0 \) contains all other intervals.
        So, the exposed length of~\( I_0 \) will be \( \abs{I_0} - \sum_{I \in \I'}^{}{\abs{I}} \in[x_0-\varepsilon, x_0+\varepsilon]\), where~\( \I' \) is the set of intervals that come before~\( I_0 \).

        If~$X$ has a solution~$X'\subseteq X$, then we order the intervals such that all intervals~$I_j$ with~$x_j\in X'$ come before~$I_0$ and the remaining intervals come after~$I_0$.
        Then, the exposed part of~$I_0$ has length
        \[
          x_0+\varepsilon - \sum_{x_j\in X'}\abs{I_j} = x_0+\varepsilon - \sum_{x_j\in X'}\frac{2\varepsilon x_j}{S_n} = x_0+\varepsilon - \frac{2 \varepsilon}{S_n} \frac{S_n}{2}=x_0\text{.}
        \]
        Hence, the overall cost is
        \[
          \sum_{x_j\in X'}\f{\abs{I_j}} + \f{x_0} = c_2\sum_{x_j\in X'}\abs{I_j}+\f{x_0}=c_2\varepsilon+\f{x_0}=\f{\varepsilon}+\f{x_0}=W\text{.}
        \]

        Conversely, assume that there is an interval sequence~$\sigma$ which has cost at most~$W$.
        Then, let~$I'$ be the subset of intervals coming before~$I_0$ in~$\sigma$ and let~$l_0$ be the length of the exposed part of~$I_0$.
        If~$l_0=x_0+t$ for some~$t\in(0,\varepsilon]$, then the cost for~$I_0$ is~$f(x_0+t)=f(x_0)+c_3\cdot t$.
        Moreover, the intervals in~$I'$ have total length~$\varepsilon -t$ and their cost is~$f(\varepsilon-t)=c_2\cdot(\varepsilon -t)$.
        Thus, the overall cost is $f(x_0)+c_2\varepsilon + t (c_3-c_2)= W + t (c_3-c_2) > W$ since~$c_3 > c_2$.
        Hence, this is not possible.
        Analogously, if~$l_0=x_0-t$ for $t\in(0,\varepsilon]$, then the overall cost would be $f(x_0)-c_1\cdot t + c_2\cdot(\varepsilon + t) = f(x_0)+c_2\varepsilon + t(c_2-c_1) = W + t(c_2-c_1) > W$ (as $c_2 > c_1$), which is again a contradiction.
        Hence, it follows that~$l_0=x_0$ which implies that the intervals in~$I'$ have total length~$\varepsilon$.
        Hence, they correspond to a solution for~$X$.
      \end{proof}
      
      Note that in \Cref{thm:general-reduction} we only fixed the order of the slopes~$c_1$,~$c_2$, and~$c_3$, but not their actual values.
      As all these values may be positive (or negative) we obtain the following.
      \begin{corollary}
        \label{cor:general-reduction}
        \fIO is \NPhard{} for continuous strictly increasing (decreasing) piecewise linear cost functions~$f$ (with only three pieces).
      \end{corollary}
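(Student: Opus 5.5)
We obtain the corollary by applying \cref{thm:general-reduction} to an explicitly constructed three-piece function. Choose $\varepsilon=1$ and $x_0=3$, so that $x_0>2\varepsilon$ holds. For the strictly increasing case, pick slopes $0<c_1<c_2<c_3$, for instance $c_1=1$, $c_2=2$, $c_3=3$. Define $f$ by three pieces:
\[
f(x)=\begin{cases} c_2 x, & x\in[0,b],\\ c_1 x+h, & x\in[b,x_0],\\ c_3 x+h', & x\in[x_0,\infty).\end{cases}
\]
The breakpoint $b$ satisfies $2\varepsilon\le b\le x_0-\varepsilon$; we take $b=2$. The constants $h$ and $h'$ are fixed by continuity: $h=(c_2-c_1)b$ and $h'=h+(c_1-c_3)x_0$.

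With these choices the three hypotheses of \cref{thm:general-reduction} hold directly. The first piece covers $[0,2\varepsilon]$ with $f(x)=c_2x$. The second piece covers $[x_0-\varepsilon,x_0]=[2,3]$ with slope $c_1$. The third piece covers $[x_0,x_0+\varepsilon]$ with slope $c_3$. Since $c_1<c_2<c_3$, the theorem yields \NPhard{ness}.

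Continuity holds by the choice of $h$ and $h'$. Because all three slopes are positive, $f$ is strictly increasing. It is also piecewise linear with exactly three pieces.

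For the strictly decreasing case, take negative slopes with the same order, for instance $c_1=-3$, $c_2=-2$, $c_3=-1$. The identical construction then gives a continuous, strictly decreasing, three-piece piecewise linear $f$ that satisfies the theorem's hypotheses.

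The only point needing care is that the theorem requires the first piece to pass through the origin, $f(x)=c_2x$. This is automatically satisfied because that piece is defined as $c_2x$ on $[0,b]$. The offsets $h$ and $h'$ are arbitrary in the theorem, so choosing them to enforce continuity is allowed. Once this is noted, no real obstacle remains: the argument is a direct instantiation of \cref{thm:general-reduction}.
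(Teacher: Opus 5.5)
Your proposal is correct and is the same argument as the paper's, which simply notes that \cref{thm:general-reduction} fixes only the order $c_1<c_2<c_3$, so taking all slopes positive (or all negative) gives a continuous strictly monotone three-piece function; your instantiation with $\varepsilon=1$, $x_0=3$, breakpoints $2$ and $3$, and $h,h'$ chosen for continuity checks this concretely. Your requirement that the first breakpoint lie in $[2\varepsilon,x_0-\varepsilon]$ is a detail the paper leaves implicit. Without it the first two pieces would overlap, which contradicts $c_1<c_2$, and your choice $b=2$ satisfies it.
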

      Moreover, \Cref{thm:general-reduction} also implies \NPhard{ness} for subadditive (and also superadditive) functions.
      To see this, observe that \Cref{thm:general-reduction} shows \NPhard{ness} for the function~$f(x)=\min\r{0,\abs{x-2}-1}$ (by choosing~$\varepsilon=0.5$, $x_0=2$, $c_1=-1$, $c_2=0$, $c_3=1$, $h=1$, $h'=-3$).
      Note that~$-1\le f(x)\le 0$ holds for all~$x\in[0,\infty)$, and thus the function~$f+2$ is subadditive since
      \[
        \f{x} + 2 + \f{y} + 2 \ge -1 + 2 - 1 + 2 = 2 \ge \f{x + y} + 2\text{.}
      \]
      In addition, the function~$f-1$ is superadditive since
      \[
        \f{x} - 1 + \f{y} - 1 \le 0 - 1 + 0 - 1 = -2 \le \f{x + y} - 1\text{.}
      \]
      Clearly, adding a constant~$c\in\real$ to the cost function does not change the structure of the problem (the optimal cost just changes by~$n\cdot c$).
      Hence, we obtain the following corollary.
      \begin{corollary}
        \label{cor:sub-superadditive-hardness}
        \fIO is \NPhard{} for subadditive (and superadditive) cost functions~$f$.
      \end{corollary}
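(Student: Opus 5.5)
The plan is to apply \Cref{thm:general-reduction} to the single function $g(x)=\min\r{0,\abs{x-2}-1}$ that the paper discusses just before the statement, and then shift it by a constant. This yields one subadditive and one superadditive cost function for which \fIO{} is \NPhard{}.

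First, I check the hypotheses of \Cref{thm:general-reduction} for $g$ with $\varepsilon=0.5$, $x_0=2$, $c_1=-1$, $c_2=0$, $c_3=1$, $h=1$, $h'=-3$. We have $x_0=2>1=2\varepsilon$. On $[0,1]$ we get $\abs{x-2}-1=1-x\ge 0$, so $g(x)=0=c_2x$. On $[1.5,2]$ we get $g(x)=1-x=c_1x+h$. On $[2,2.5]$ we get $g(x)=x-3=c_3x+h'$. The slopes satisfy $-1<0<1$. Hence \fIO{} is \NPhard{} for $f=g$.

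Second, I show that adding a constant does not change the complexity. For any $c\in\real$ and any ordering $\sigma$ of $n$ intervals, the cost under $g+c$ is exactly the cost under $g$ plus $nc$. Mapping an instance $(\I,W)$ for $g$ to $(\I,W+nc)$ for $g+c$ is therefore a polynomial-time reduction, so \fIO{} is \NPhard{} for $g+c$ for every constant $c$.

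Third, I verify the additivity properties using only the bounds $-1\le g\le 0$ on $\nonneg$.
\begin{itemize}
  \item Lower bound: $\abs{x-2}-1\ge -1$.
  \item Upper bound: $g$ is a minimum with $0$.
  \item Subadditivity of $g+2$: $(g(x)+2)+(g(y)+2)\ge 2\ge g(x+y)+2$.
  \item Superadditivity of $g-1$: $(g(x)-1)+(g(y)-1)\le -2\le g(x+y)-1$.
\end{itemize}
Taking $c=2$ and $c=-1$ in the second step gives the two claims.

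There is no real obstacle. The only points needing care are checking the three linear pieces on the exact intervals $\s{0,2\varepsilon}$, $\s{x_0-\varepsilon,x_0}$ and $\s{x_0,x_0+\varepsilon}$, and noting that the constant shift keeps the reduction valid because every ordering has exactly $n$ terms.
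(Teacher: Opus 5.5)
Your proposal is correct and follows the paper's argument: you instantiate \Cref{thm:general-reduction} with $g(x)=\min(0,|x-2|-1)$ and the same parameters, use $-1\le g\le 0$ to show that $g+2$ is subadditive and $g-1$ is superadditive, and transfer hardness through the constant shift. Your explicit checks of the three linear pieces and of the reduction $(\I,W)\mapsto(\I,W+nc)$ only spell out details that the paper leaves implicit.
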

      
      Finally, we remark that our reductions do not work for integral input intervals (unlike the reduction by~\citetdurr{}).
      But we believe that our approach could be adapted to yield \NPhard{ness} for continuous monotonic piecewise linear functions with integral intervals.
      The cost function would need a series of such ``kinks'' (like the one at~$x_0$) at increasingly large scales.

  \section{Parameterized Complexity}
    \label{sec:parameter}

    In this section, we introduce a more fine-grained analysis of the computational complexity of~\fIO{}.
    To this end, we identify certain parameters which render the problem tractable when their values are small (constants).
    As the exponential part of the running time of our general algorithm depends on the number~$\abs{\E_\I}$ of possible exposed parts, we aim for parameters which bound this number.

    Our first parameter is the maximum number of interval components of any exposed part in~\( \E_\I \), which we call~$\alpha_\I$.
    Note that we have already seen a polynomial-time algorithm for the case that \( \alpha_\I = 1 \) in \cref{thm:interval-exposed}.
    We generalize this result to any constant value of~$\alpha_\I$ (in terms of parameterized complexity, the problem is in XP for parameter~$\alpha_\I$~\cite{downey2013fundamentals}).

    \begin{theorem}
      \label{thm:parameter-alpha}
      The size of~\( \E_\I \) is bounded in \( \O{n^{2\alpha_\I}} \).
      Thus, \fIO{} is polynomial-time solvable whenever~$\alpha_\I$ is constant.
    \end{theorem}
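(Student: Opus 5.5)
The plan is to encode every exposed part by the endpoints of its interval components, count the possible encodings, and then plug the resulting bound into the running-time analysis of \cref{alg:compute-opt}.

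\textbf{Endpoints.} Let $E\in\E_\I$ be a non-empty exposed part, say $E=E_{\sigma,j}=I_j\setminus C(\I')$ for the set $\I'$ of intervals before $I_j$. Then $E$ is a union of $m\le\alpha_\I$ pairwise disjoint, non-touching intervals $[s_1,t_1),\ldots,[s_m,t_m)$. Consider a start point $s_l$. If $s_l>a_j$, then $s_l$ is a boundary point of $C(\I')$, so it is the end point of some interval in $\I'$. Otherwise $s_l=a_j$. In both cases $s_l$ lies in the set $P$ of all start and end points of intervals in $\I$. Symmetrically, each $t_l$ is either $b_j$ or a start point of an interval in $\I'$, so $t_l\in P$ as well. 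Since $|P|\le 2n$, the set $E$ is determined by a sequence of at most $2\alpha_\I$ points from $P$.

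\textbf{Counting.} Each $E$ corresponds to an ordered choice $s_1<t_1<s_2<\cdots<t_m$ with $m\le\alpha_\I$. The number of such choices is at most
\[
\sum_{m=1}^{\alpha_\I}\binom{2n}{2m}\in\O{n^{2\alpha_\I}}
\]
for constant $\alpha_\I$. In fact, for every $\alpha_\I$ this is at most $\alpha_\I(2n)^{2\alpha_\I}$, which is still polynomial in $n$ once $\alpha_\I$ is fixed. This proves the size bound.

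\textbf{Algorithm.} It remains to compute $\E_\I$ within polynomial time, and I expect this to be the main obstacle. The first approach I would try is the iterative construction from the proof of \cref{lem:compute-exposed}. Every intermediate set $\E_{\I_k}$ consists of exposed parts of a subinstance, and each such part has at most $\alpha_\I$ components. The reason is that $E\setminus I_{k+1}$ being an exposed part of $\I_{k+1}$ means it is also an exposed part of $\I$, obtained by placing the remaining intervals last. So every $\E_{\I_k}\subseteq\E_\I$, and the size bound above applies at each step. Each step then takes $\O{n\cdot|\E_\I|\cdot n}$ time, for $\O{n^{2\alpha_\I+3}}$ time overall; duplicates are detected with a search tree as in \cref{lem:compute-exposed}.

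An alternative avoids this bookkeeping. Enumerate all candidate tuples of points from $P$ and keep only those that are genuine exposed parts, i.e., the candidate must be $I_j\setminus C$ for some interval $I_j$ and some union $C$ of intervals not containing $I_j$. For a candidate $E$ and interval $I_j\supseteq E$, set $D:=I_j\setminus E$. Then $E$ is realizable as the exposed part of $I_j$ iff every interval component of $D$ is covered by the intervals other than $I_j$ that avoid $E$. This can be checked greedily in polynomial time, because it is optimal to place before $I_j$ every interval disjoint from $E$, and then check that $D$ is covered.

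\textbf{Conclusion.} Finally, running \cref{alg:compute-opt} with this set gives total time $\O{n^3\cdot|\E_\I|}=\O{n^{2\alpha_\I+3}}$, which together with the preprocessing of \cref{lem:preproc} is polynomial for constant $\alpha_\I$.
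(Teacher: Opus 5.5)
Your proof is correct. The size bound is argued as in the paper: each component starts and ends at one of $\O{n}$ interval endpoints, and there are at most $\alpha_\I$ components. Your half-open boundary argument for why $s_l$ and $t_l$ are endpoints is a more careful version of the paper's one-line claim.

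The genuine difference is in how $\E_\I$ is computed. The paper takes each interval $I$, groups the covered intervals inside $I$ into at most $\alpha_\I-1$ classes of pairwise intersecting or touching ones, and enumerates combinations. In other words, it guesses the covered gaps directly. You give two alternatives instead.

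\begin{enumerate}
\item[(i)] You rerun the incremental construction of \cref{lem:compute-exposed} and observe that it is output-sensitive. Every intermediate set satisfies $\E_{\I_k}\subseteq\E_\I$, since the missing intervals can be appended at the end. This gives $\O{n^{2\alpha_\I+3}}$ time.
\item[(ii)] You enumerate endpoint tuples and test each with an exact criterion: a non-empty $E\subseteq I_j$ is an exposed part of $I_j$ if and only if $I_j\setminus E$ is covered by the intervals other than $I_j$ that are disjoint from $E$.
\end{enumerate}

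Route (i) reuses existing machinery, needs no a priori bound on $\alpha_\I$, and makes the exponent explicit. Route (ii) as stated presupposes such a bound, but its membership test is clean. It also handles parts of $I_j$ covered by intervals protruding beyond $I_j$, which the paper's phrase ``covered intervals that are contained in $I$'' leaves implicit.

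The paper's sketch is shorter. Both of your routes, however, make the correctness of the enumeration explicit rather than leaving it to the reader.
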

    \begin{proof}
      Each interval component of an exposed part has at most \( n \) possible start and end points.
      Thus, each exposed part has at most \( n^2 \) possible interval components.
      So, for each non-empty exposed part, we have at most \( \alpha_\I \) choices, each being one of the at most~\( n^2 \) interval components.
      This yields \(\O{n^{2\alpha_\I}}\) different exposed parts.

      To compute~\( \E_\I \), we iterate over all intervals~$I$ in~$\I$ and compute all covered intervals that are contained in~$I$.
      As any exposed part has at most~$\alpha_\I$ interval components, we can partition the covered intervals into at most~$\alpha_\I-1$ sets of pairwise intersecting or touching covered intervals.
      Then we just iterate over all combinations of covered intervals to find all exposed parts.
      To prevent duplicates, we keep the exposed parts in a balanced search tree.
      Thus, we can compute the set~\( \E_\I \) in~\( \O{n^{2\alpha_\I + c}} \) time for some constant~\( c \).
    \end{proof}

    It is unclear whether the exponential part of the running time can be solely confined to~$\alpha_\I$, that is, whether a running time of~$g(\alpha_\I)\cdot n^{\O{1}}$ is possible (this is called \emph{fixed-parameter tractability} in parameterized complexity~\cite{downey2013fundamentals}).
    However, for the parameter~$s_\I$ defined as the maximum number of proper subintervals any interval~$I\in\I$ has in~$\I$, fixed-parameter tractability can be shown (note that agreeable intervals have~$s_\I=0$).

    \begin{theorem}
      \label{thm:fpt-subintervals}
      The size of~\( \E_\I \) is bounded in \( \O{n^2 \cdot 2^{s_\I}} \).
      Thus,~\fIO{} is fixed-parameter tractable with respect to~$s_\I$.
    \end{theorem}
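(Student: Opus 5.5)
The plan is to bound, for each fixed interval $I=[a,b)\in\I$, the number of distinct exposed parts of $I$ by $\O{n^2\cdot 2^{s_\I}}$ in total over all $I$. To do this, split the intervals covering parts of $I$ into two kinds. The first kind are the proper subintervals of $I$, of which there are at most $s_\I$. The second kind are the intervals that intersect $I$ but are not contained in it. Any non-contained interval $J$ that intersects $I$ either contains $a$, or contains $b-\epsilon$ (it sticks out to the right), or contains all of $I$. In the last case the exposed part is empty. So for a nonempty exposed part, the earlier non-contained intervals together cover a prefix $[a,s)$ and a suffix $[t,b)$ of $I$, where the prefix and suffix may be empty.

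The exposed part of $I$ is therefore
\[
  E = [s,t)\setminus \bigcup_{J\in\mathcal{S}} J,
\]
where $\mathcal{S}$ is the set of proper subintervals of $I$ that come before $I$. Here $s$ is $a$ or an end point of some interval, and $t$ is $b$ or a start point of some interval. This gives at most $(n+1)^2$ choices for the pair $(s,t)$. There are also at most $2^{s_\I}$ choices for $\mathcal{S}$, since it is a subset of the proper subintervals of $I$.

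Hence each interval contributes $\O{n^2 2^{s_\I}}$ exposed parts, and $\abs{\E_\I}\in\O{n^3 2^{s_\I}}$ overall. To reach the claimed $\O{n^2\cdot 2^{s_\I}}$, charge the pair $(s,t)$ globally rather than per interval. An exposed part is determined by the triple $(s,t,\mathcal{S})$, with $\mathcal{S}$ ranging over subsets of the subintervals of $[s,t)$. For fixed $(s,t)$, every interval $I$ that could realise it contains $[s,t)$, and the relevant $\mathcal{S}$ consists of intervals lying inside $[s,t)$. So fix any such $I$: the candidate sets are subsets of the at most $s_\I$ proper subintervals of that single $I$. This yields $\O{n^2}$ pairs times $2^{s_\I}$ sets.

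For tractability, enumerate these candidates in $\O{n^2 2^{s_\I}\mathrm{poly}(n)}$ time. Validity does not need to be checked exactly, since the algorithm may safely use a superset of $\E_\I$. Line~\ref{line:checkE} and Line~\ref{line:checkC'} of \cref{alg:compute-opt} already filter out any $E$ that is not a legitimate decomposition. The one caveat is that the superset must still only produce achievable costs. This holds because the ``$\le$'' direction of \cref{lem:recursion} only needs $E\subseteq I\subseteq C$ for some $I$ and $C\setminus E\in\C_\I$. Plugging this set into \cref{alg:compute-opt} then gives running time $\O{n^3\cdot n^2 2^{s_\I}}$, which is FPT.

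The main obstacle is the global charging step that removes a factor of $n$. It needs the fact that, for fixed $(s,t)$, the covering subintervals all lie inside $[s,t)$ and hence inside every realising $I$. If that step fails, I would settle for $\O{n^3 2^{s_\I}}$, which still suffices for fixed-parameter tractability.
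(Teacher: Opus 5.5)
\textbf{There is a gap in the step that gives the stated bound.}

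Your per-interval count is correct. For fixed $I=[a,b)$, the earlier non-contained intervals cover a prefix $[a,s)$ and a suffix $[t,b)$, so $E=[s,t)\setminus\bigcup\mathcal{S}$ with $\mathcal{S}$ a set of proper subintervals of $I$. This gives $\abs{\E_\I}\in\mathcal{O}(n^3 2^{s_\I})$, which already suffices for fixed-parameter tractability. Your remark that \cref{alg:compute-opt} may iterate over any superset of $\E_\I$ is also correct: if $E\subseteq I\subseteq C$ and all components of $C\setminus E$ are covered intervals, then $I\setminus(C\setminus E)=E$, so every computed value is achievable. That is more careful than the paper's brief ``try all combinations''.

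The problem is the global charging step, which is exactly what the stated bound $\mathcal{O}(n^2\cdot 2^{s_\I})$ needs. You defined $(s,t)$ as the boundaries of the prefix and suffix covered by non-contained intervals. With that definition, the claim that the relevant $\mathcal{S}$ lies inside $[s,t)$ is false. Take $I=[0,10)$ preceded by $[-1,2)$ and by $J=[1,4)$. Then $s=2$, $t=10$ and $E=[4,10)$, but $J\not\subseteq[2,10)$. So for a fixed pair, the relevant sets are subsets of the proper subintervals of whichever $I$ realises the pair, and these can differ between realising intervals. Your argument then only gives $n\cdot 2^{s_\I}$ per pair, i.e.\ back to $\mathcal{O}(n^3 2^{s_\I})$. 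You identified the needed fact yourself but did not prove it, and it fails for your $(s,t)$.

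The missing idea, which the paper's proof uses, is to let $s$ and $t$ be the start point of the first and the end point of the last interval component of $E$ itself. Then no interval preceding $I$ can straddle $s$ or $t$:
\begin{itemize}
\item an interval meeting $[s,t)$ that starts before $s$ contains $s\in E$;
\item an interval meeting $[s,t)$ that ends after $t$ contains points of $E$ just left of $t$.
\end{itemize}
Hence every earlier interval meeting $[s,t)$ lies inside $[s,t)$, and differs from $[s,t)$ since $E\neq\emptyset$. So $E=[s,t)\setminus\bigcup\mathcal{S}$, where $\mathcal{S}$ is a subset of the input intervals properly contained in $[s,t)$. That set depends only on $(s,t)$. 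Because $[s,t)\subseteq I$, all its members are proper subintervals of $I$, so it has at most $s_\I$ elements.

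Moreover, $s$ is the start point of $I$ or an end point of an input interval, and $t$ is the end point of $I$ or a start point of an input interval. So there are $\mathcal{O}(n^2)$ pairs, which gives $\mathcal{O}(n^2\cdot 2^{s_\I})$. Enumerating these triples, over pairs with $[s,t)$ contained in some input interval, also supplies the candidate superset your algorithmic argument needs.
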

    \begin{proof}
      Clearly, there are \( \O{n^2} \) possible pairs of first start and last end points of the interval components of any exposed part.
      The interval spanned by the first start and last end point of any exposed part has at most~\( s_\I\) subintervals in~\( \I \).
      Hence, there are at most \( 2^{s_\I} \) different exposed parts for each combination of first start and last end point.
      So, overall there are \( \O{n^2 \cdot 2^{s_\I}} \) possible exposed parts.
      It is easy to see that we can compute the set~\( \E_\I \) in \( \O{n^c \cdot 2^{s_\I}} \) time for some constant~\( c \), as we just have to try all combinations and, like in \cref{lem:compute-exposed}, keep all exposed parts we find in a balanced search tree to detect duplicates in polynomial time.
    \end{proof}

  \section{Conclusion}
    \label{sec:conclusion}

    We took a large step towards settling the computational complexity of the interval ordering problem.
    As regards the cost function, we showed broad families of functions to be polynomial-time solvable while we also showed \NPhard{ness} for a wide range of functions.
    Further closing the gap between the tractable and intractable cases by identifying the specific properties that render a function tractable is an interesting challenge.
    For example, the complexity for subadditive and superadditive functions is not fully settled.

    Regarding running times, on one side, there is clearly the challenge to improve the polynomial running times.
    This may allow experiments on real-world datasets in order to evaluate whether integrating an \fIO{} solver into actual protein backbone reconstruction algorithms is worth the computation overhead.
    On the other side, proving conditional running time lower bounds for the polynomial-time solvable cases would also be interesting.
    Moreover, we initiated the study of the parameterized complexity of the problem.
    It could be fruitful to identify parameters which render instances tractable in practice.
    For example, it is open whether the problem is fixed-parameter tractable for the maximum number~\( \alpha_\I \) of interval components of any exposed part.
    


  \bibliography{references}


\end{document}